\documentclass{article}
\ifdefined\pdfinfoomitdate\pdfinfoomitdate=1\fi
\ifdefined\pdftrailerid\pdftrailerid{}\fi
\ifdefined\pdfsuppressptexinfo\pdfsuppressptexinfo=-1\fi
\usepackage[T1]{fontenc}
\usepackage{microtype}
\usepackage{graphicx}
\usepackage{booktabs}
\usepackage{tabularx}
\usepackage{amsmath,amssymb,amsthm}
\usepackage{xcolor}
\usepackage{hyperref}

\usepackage[accepted]{mlsys2025}
\usepackage{float}
\makeatletter
\renewcommand{\mlsys@appearing}{Preprint.}
\renewcommand{\printAffiliationsAndNotice}[1]{%
\stepcounter{@affiliationcounter}%
{\let\thefootnote\relax\footnotetext{\raggedright\hspace*{-\footnotesep}\ifdefined\isaccepted #1\fi%
\forloop{@affilnum}{1}{\value{@affilnum} < \value{@affiliationcounter}}{
\textsuperscript{\arabic{@affilnum}}\ifcsname @affilname\the@affilnum\endcsname%
\csname @affilname\the@affilnum\endcsname%
\else
{\bf AUTHORERR: Missing \textbackslash{}mlsysaffiliation.}
\fi
}.
\ifdefined\mlsyscorrespondingauthor@text
Correspondence to: \mlsyscorrespondingauthor@text.
\else
{\bf AUTHORERR: Missing \textbackslash{}mlsyscorrespondingauthor.}
\fi

\ \\
\Notice@String
}
}
}
\makeatother
\usepackage[capitalise]{cleveref}

\newtheorem{proposition}{Proposition}
\theoremstyle{remark}
\newtheorem{remark}{Remark}

\newcommand{\BoneGsmMed}{0.91}

\newcommand{\BtwoNfp}{40}
\newcommand{\BtwoDeepBf}{6.7}

\newcommand{\BtwoDeepFp}{15.0}
\newcommand{\BtwoDeepFpLo}{14.2}
\newcommand{\BtwoDeepFpHi}{15.8}

\newcommand{\BtwoSfourFirst}{0.990}

\newcommand{\BtwoShareEightFirst}{0.980}

\newcommand{\BtwoShareFourFirst}{0.970}

\newcommand{\BtwoWinFourFirst}{0.845}

\newcommand{\BthreeFullFour}{3.31}
\newcommand{\BthreeSfourFour}{4.33}

\newcommand{\BthreeN}{60}

\newcommand{\BfourRerunSame}{30\%}
\newcommand{\BfourRerunKL}{1.6\times10^{-4}}

\newcommand{\BfiveSpeedZero}{2.64}
\newcommand{\BfiveAccZero}{0.878}

\newcommand{\BfiveSameZero}{31}
\newcommand{\BfiveSpeedOne}{4.14}
\newcommand{\BfiveAccOne}{0.965}

\newcommand{\BsixRatio}{2.24}
\newcommand{\BsevenSamePos}{100\%}

\newcommand{\BsevenConeLabel}{exceeds a precision change}

\newcommand{\BsevenConePeak}{13.8}
\newcommand{\BsevenConePeakLo}{2.28}
\newcommand{\BsevenConePeakHi}{21.2}
\newcommand{\BsevenPeakShare}{75\%}

\newcommand{\BsevenMedSfour}{6.4\times10^{-5}}
\newcommand{\BsevenMedFp}{8.2\times10^{-5}}
\newcommand{\BsevenMeanRatio}{5.7}
\newcommand{\BsevenBinMax}{3.3}
\newcommand{\BsevenBinMin}{1.6}
\newcommand{\BsevenMaxSfour}{5.05}

\newcommand{\BsevenShareSfour}{74\%}
\newcommand{\BsevenShareFp}{21\%}

\newcommand{\BsevenWsameZero}{31}
\newcommand{\BsevenWsameOne}{31}
\newcommand{\BnineDeepBf}{3.8}

\newcommand{\BnineDeepFp}{12.8}
\newcommand{\BnineDeepFpLo}{12.0}
\newcommand{\BnineDeepFpHi}{13.7}

\newcommand{\BnineNfp}{40}
\newcommand{\BnineSsixFirst}{0.98}
\newcommand{\BnineConeLabel}{indistinguishable from a precision change}

\newcommand{\BnineBinMin}{1.04}
\newcommand{\BnineBinMax}{1.51}
\newcommand{\BnineMaxSsix}{0.020}
\newcommand{\BnineMaxFp}{0.014}

\newcommand{\BnineFlipSsix}{48}
\newcommand{\BnineFlipFp}{76}

\newcommand{\BeightAPrompts}{32}
\newcommand{\BeightBvfSame}{20}

\newcommand{\BeightAoneRatio}{1.37}
\newcommand{\BeightAoneLo}{1.20}
\newcommand{\BeightAoneHi}{1.62}
\newcommand{\BeightAoneLabel}{indistinguishable from numerics}
\newcommand{\BeightAtwoRatio}{0.94}

\newcommand{\BeightBvSfourSame}{28}

\newcommand{\BeightTpvFull}{3.34}
\newcommand{\BeightTpvSfour}{4.47}

\newcommand{\BeightSeqRatio}{1.00}
\newcommand{\BeightMetricsGsm}{24}

\newcommand{\BeightTPrompts}{12}
\newcommand{\BeightSpeedFull}{2.09}
\newcommand{\BeightSpeedSfour}{2.58}

\newcommand{\BeightTailC}{9}
\newcommand{\BeightFlipC}{35}
\newcommand{\BeightMaxC}{5.05}

\newcommand{\BeightTailWsixteen}{5}
\newcommand{\BeightFlipWsixteen}{37}
\newcommand{\BeightMaxWsixteen}{0.97}

\newcommand{\BeightTailWsixtyfour}{1}
\newcommand{\BeightFlipWsixtyfour}{31}
\newcommand{\BeightMaxWsixtyfour}{0.13}

\newcommand{\BeightTailWtwofivesix}{0}
\newcommand{\BeightFlipWtwofivesix}{10}
\newcommand{\BeightMaxWtwofivesix}{0.04}

\newcommand{\BeightTailKfour}{9}
\newcommand{\BeightFlipKfour}{38}
\newcommand{\BeightMaxKfour}{5.03}

\newcommand{\BeightTailSeight}{9}
\newcommand{\BeightFlipSeight}{42}
\newcommand{\BeightMaxSeight}{5.24}

\newcommand{\BeightTailF}{1}
\newcommand{\BeightFlipF}{32}
\newcommand{\BeightMaxF}{0.23}
\newcommand{\BeightBoneRatio}{0.11}
\newcommand{\BeightBoneLo}{0.00}
\newcommand{\BeightBoneHi}{0.20}
\newcommand{\BeightBoneRed}{89\%}

\newcommand{\BeightBPrompts}{48}

\newcommand{\BeightConeRatio}{72}

\newcommand{\BeightCtwoRatio}{1.88}
\newcommand{\BeightCtwoLo}{1.65}
\newcommand{\BeightCtwoHi}{2.20}

\newcommand{\BeightCthreeRatio}{0.88}

\newcommand{\BeightCfourRatio}{0.33}

\newcommand{\BeightDratio}{2.26}
\newcommand{\MHugFullKB}{2{,}788}
\newcommand{\MHugDeepKB}{169}
\newcommand{\MHugDeepFrac}{6.1\%}
\newcommand{\MHugTwoFrac}{9.1\%}
\newcommand{\MHugFullMB}{2{,}855}
\newcommand{\MHugDeepMB}{173}
\newcommand{\MHugRipeSixteenMB}{215}
\newcommand{\MHugRipeSixteenFrac}{7.5\%}
\newcommand{\MHugRipeSixteenExtra}{42}
\newcommand{\MHugRipeSixtyFourMB}{341}
\newcommand{\MHugRipeSixtyFourFrac}{11.9\%}
\newcommand{\MHugRipeSixtyFourExtra}{168}
\newcommand{\MHugRipeTwoFiveSixMB}{843}
\newcommand{\MHugRipeTwoFiveSixFrac}{29.5\%}
\newcommand{\MHugRipeTwoFiveSixExtra}{670}
\newcommand{\MRavFullKB}{410}
\newcommand{\MRavDeepKB}{28.7}
\newcommand{\MRavDeepFrac}{7.0\%}
\newcommand{\MRavTwoFrac}{10.0\%}
\newcommand{\MRavFullMB}{419}
\newcommand{\MRavDeepMB}{29}
\newcommand{\MRavRipeSixteenMB}{35}
\newcommand{\MRavRipeSixteenFrac}{8.5\%}

\newcommand{\MRavRipeSixtyFourMB}{54}
\newcommand{\MRavRipeSixtyFourFrac}{12.8\%}

\newcommand{\MRavRipeTwoFiveSixMB}{127}
\newcommand{\MRavRipeTwoFiveSixFrac}{30.2\%}

\newcommand{\MHugSinkExtra}{10}
\newcommand{\MHugTwoExtra}{87}
\newcommand{\MCells}{26}
\newcommand{\MCacheErr}{0.005}
\newcommand{\MMaxSeq}{808}
\newcommand{\MOzero}{7.2}
\newcommand{\MOoneFull}{0.18}
\newcommand{\MOoneShared}{0.15}
\newcommand{\MPeakOosErr}{0.15\%}
\newcommand{\MPeakOosN}{9}
\newcommand{\MPerSeqFull}{2.25}

\newcommand{\MBmaxFull}{36}
\newcommand{\MBmaxShared}{302}

\newcommand{\MTLossSixInf}{449}

\newcommand{\MTLossSixRes}{2\%}
\newcommand{\MTShSixInf}{689}

\newcommand{\MTShSixRes}{6\%}

\newcommand{\MGHugFp}{2.70}
\newcommand{\MGHugFpLo}{2.52}
\newcommand{\MGHugFpHi}{2.87}
\newcommand{\MSlopeShHugFp}{2.06}
\newcommand{\MSlopeDpHugFp}{2.71}

\newcommand{\MGRavFp}{1.82}
\newcommand{\MGRavFpLo}{1.68}
\newcommand{\MGRavFpHi}{1.97}

\newcommand{\MCensHugBf}{23\%}

\newcommand{\MCensRavBf}{21\%}
\newcommand{\MRhoHugK}{0.88}
\newcommand{\MHalfHugK}{5.2}
\newcommand{\MFloorHugK}{0.0072}
\newcommand{\MDeepBfHugK}{0.0116}
\newcommand{\MDeepFpHugK}{0.0097}
\newcommand{\MResFpHugK}{0.28}
\newcommand{\MResBfHugK}{0.27}
\newcommand{\MRhoHugV}{0.87}
\newcommand{\MHalfHugV}{5.1}
\newcommand{\MFloorHugV}{0.0115}
\newcommand{\MDeepBfHugV}{0.0192}
\newcommand{\MDeepFpHugV}{0.0162}

\newcommand{\MRhoRavK}{0.81}
\newcommand{\MHalfRavK}{3.3}
\newcommand{\MFloorRavK}{0.0048}
\newcommand{\MDeepBfRavK}{0.0047}
\newcommand{\MDeepFpRavK}{1.6\times10^{-4}}
\newcommand{\MResFpRavK}{0.51}
\newcommand{\MResBfRavK}{0.19}
\newcommand{\MRhoRavV}{0.81}
\newcommand{\MHalfRavV}{3.3}
\newcommand{\MFloorRavV}{0.0101}
\newcommand{\MDeepBfRavV}{0.0098}
\newcommand{\MDeepFpRavV}{3.7\times10^{-4}}

\newcommand{\MDeepRatioFpK}{61}
\newcommand{\MErrPredHugFp}{3.6\times10^{-6}}

\newcommand{\MErrSHugFp}{3.2\times10^{-6}}

\newcommand{\MErrShallowHugFp}{2.4\times10^{-9}}

\newcommand{\MErrPredRavFp}{1.8\times10^{-8}}

\newcommand{\MErrSmeanRavFp}{2.7\times10^{-8}}

\newcommand{\MErrRatio}{202}
\newcommand{\MTailN}{9}
\newcommand{\MTailPrompts}{4}
\newcommand{\MTailTop}{5}
\newcommand{\MTailLowMargin}{2}
\newcommand{\MLowMarginFrac}{8\%}
\newcommand{\MTailPositions}{7{,}130}
\newcommand{\MRavPositions}{7{,}647}

\newcommand{\MTailFpHug}{1}

\newcommand{\MTraceWsixteen}{0.97}
\newcommand{\MTraceWsixtyfour}{0.13}

\newcommand{\MAlphaFullFour}{0.81}

\newcommand{\MSpreadFull}{0.06}
\newcommand{\MRejRhoFull}{0.76}

\newcommand{\MAlphaDeepFour}{0.93}

\newcommand{\MSpreadDeep}{0.04}
\newcommand{\MRejRhoDeep}{0.66}
\newcommand{\MTauErrMax}{2.6\%}
\newcommand{\MTauOosErr}{6\%}
\newcommand{\MTauPredFullOos}{3.40}
\newcommand{\MTauPredDeepOos}{4.33}
\newcommand{\MTauTimFull}{3.61}
\newcommand{\MTauTimDeep}{4.53}
\newcommand{\MEta}{1.07}

\newcommand{\MCfull}{132}
\newcommand{\MSpZeroFull}{2.24}
\newcommand{\MSpOneFull}{2.08}
\newcommand{\MSpMeasFull}{2.09}
\newcommand{\MSpZeroDeep}{2.80}
\newcommand{\MSpOneDeep}{2.61}
\newcommand{\MSpMeasDeep}{2.58}
\newcommand{\MSpPromptErr}{5.4\%}
\newcommand{\MSpPrompts}{12}

\newcommand{\MHugTwoMB}{260}
\newcommand{\MRavTwoMB}{42}
\newcommand{\MSpTimings}{24}

\newcommand{\MDecEq}{1.67}
\newcommand{\MDecBatch}{1.34}
\newcommand{\MDecTotal}{2.24}

\newcommand{\MDecBeq}{32}
\newcommand{\MDecSharedB}{64}

\newcommand{\MErrSizeHugFp}{1.7\times10^{-6}}
\newcommand{\MErrSizeRavFp}{4.7\times10^{-9}}
\newcommand{\MErrRatioSize}{366}
\newcommand{\MTailLowExp}{0.7}
\newcommand{\MTailLowOver}{2.7}
\newcommand{\MDesignDeepFourGamma}{8}
\newcommand{\MDesignDeepFourS}{2.87}
\newcommand{\MDesignFullFourGamma}{5}
\newcommand{\MDesignFullFourS}{1.98}

\newcommand{\BrOneRatio}{1.24}
\newcommand{\BrOneLo}{0.99}
\newcommand{\BrOneHi}{1.57}
\newcommand{\BrOneLabel}{indistinguishable from numerics}
\newcommand{\BrPrompts}{22}
\newcommand{\BrPlanned}{32}
\newcommand{\BrDdEqSeq}{19}

\newcommand{\BrTauFull}{3.39}
\newcommand{\BrSpeedFull}{1.87}
\newcommand{\BrTauSfour}{4.46}
\newcommand{\BrSpeedSfour}{2.77}
\newcommand{\BrTauRipe}{3.42}
\newcommand{\BrSpeedRipe}{1.81}
\newcommand{\BrTauDd}{4.32}
\newcommand{\BrSpeedDd}{2.33}
\newcommand{\BrRecDd}{88\%}
\newcommand{\BrRecDdLo}{79\%}
\newcommand{\BrRecDdHi}{97\%}
\newcommand{\BrRecRipe}{3\%}
\newcommand{\BrRecDdLabel}{keeps the shared gain}
\newcommand{\BrThreeRatio}{0.80}
\newcommand{\BrThreeLo}{0.39}
\newcommand{\BrThreeHi}{1.25}
\newcommand{\BrThreeLabel}{no difference}
\newcommand{\BrTailDd}{1}
\newcommand{\BrTailSfour}{3}
\newcommand{\BrMaxDd}{0.94}
\newcommand{\BrMaxSfour}{0.53}

\newcommand{\BrPredRipe}{2.06}
\newcommand{\BrMeasRipe}{2.06}
\newcommand{\BrErrRipe}{$-$0.1\%}
\newcommand{\BrPredDd}{2.59}
\newcommand{\BrMeasDd}{2.66}
\newcommand{\BrErrDd}{+2.5\%}

\newcommand{\BrAncPredFull}{2.07}
\newcommand{\BrAncMeasFull}{1.87}
\newcommand{\BrAncErrFull}{$-$10\%}
\newcommand{\BrAncPredSfour}{2.61}
\newcommand{\BrAncMeasSfour}{2.77}
\newcommand{\BrAncErrSfour}{+6\%}
\newcommand{\BrAncErrMax}{10\%}
\newcommand{\BrSeqRatio}{0.88}
\newcommand{\BrTimingPrompts}{12}

\newcommand{\MRejKFull}{0.62}
\newcommand{\MRejKDeep}{0.42}

\newcommand{\ReplayPositions}{4{,}541}
\newcommand{\ReplayBootstrapDraws}{2{,}000}
\newcommand{\ReplayBootstrapUndefined}{28}
\newcommand{\ReplayBootstrapDefined}{1{,}972}
\newcommand{\ReplayLooMin}{85.7\%}
\newcommand{\ReplayLooMax}{100.0\%}
\newcommand{\ReplayTailPromptsC}{4}
\newcommand{\ReplayTailPromptsWsixteen}{3}
\newcommand{\ReplayTailPromptsWsixtyfour}{1}
\newcommand{\ReplayTailPromptsWtwofivesix}{0}
\newcommand{\ReplayTailPromptsKfour}{4}

\newcommand{\ReplayTailPromptsSeight}{4}
\newcommand{\ReplayTailPromptsF}{1}

\newcommand{\sysname}{RIPE}
\newcommand{\sysexpansion}{Recent-window Intact, deepest Plane Elsewhere}
\newcommand{\MOuroPromptShared}{80\%}
\newcommand{\KL}{\mathrm{KL}}
\newcommand{\npre}{n_{\mathrm{pre}}}
\newcommand{\ncore}{n_{\mathrm{core}}}
\newcommand{\ncoda}{n_{\mathrm{coda}}}

\definecolor{hug}{HTML}{1F5FA8}
\definecolor{rav}{HTML}{D9730D}
\definecolor{deepc}{HTML}{1F5FA8}
\definecolor{shallowc}{HTML}{8A8A8A}
\definecolor{ripe}{HTML}{2E8B57}
\definecolor{w16}{HTML}{98C998}
\definecolor{w64}{HTML}{2E8B57}
\definecolor{w256}{HTML}{0B4F2B}

\mlsystitlerunning{Recency Is the Wrong Axis for Looped KV Caches}

\begin{document}

\twocolumn[
\mlsystitle{Recency Is the Wrong Axis: A Negative Result on\\Windowed Deepest-Plane KV Caching in Looped Language Models}

\begin{mlsysauthorlist}
\mlsysauthor{Dezhen Lu}{aff1}
\mlsysauthor{Runguo Li}{aff2}
\mlsysauthor{Zixiang Ni}{aff3}
\mlsysauthor{Yuchen Tang}{zstu}
\mlsysauthor{Xingquan Kong}{swpu}
\mlsysauthor{Jiaxing Yang}{fudan}
\mlsysauthor{Jiarui Guo}{nankai}
\end{mlsysauthorlist}

\mlsysaffiliation{aff1}{University of T\"ubingen, T\"ubingen, Germany}
\mlsysaffiliation{aff2}{University of Illinois Urbana-Champaign, Urbana, IL, USA}
\mlsysaffiliation{aff3}{Xi'an Jiaotong University, Xi'an, China}
\mlsysaffiliation{zstu}{Zhejiang Sci-Tech University, China}
\mlsysaffiliation{swpu}{Southwest Petroleum University, China}
\mlsysaffiliation{fudan}{Fudan University, Shanghai, China}
\mlsysaffiliation{nankai}{Nankai University, Tianjin, China}
\mlsyscorrespondingauthor{Dezhen Lu}{dezhen.lu@student.uni-tuebingen.de}
\mlsyscorrespondingauthor{Runguo Li}{runguol2@illinois.edu}
\mlsyscorrespondingauthor{Zixiang Ni}{nzx@stu.xjtu.edu.cn}

\mlsyskeywords{looped transformers, recurrent depth, KV cache, speculative decoding, performance modeling}

\vskip 0.3in

\begin{abstract}
Looped language models reuse one block of layers over $R$ recurrent steps but store a key--value (KV) plane for each step.
Keeping only the deepest plane shrinks the cache up to $R$-fold; a natural repair, which we call \sysname{}, keeps every plane
for the $W$ most recent positions and shares the deepest plane elsewhere. This repair acts on the wrong axis: across three
looped models, the cost of sharing is set by how far the late loops' planes have converged (\emph{ripeness}) and by which
positions share, not by recency. On Raven-Llama the deep planes sit within the bf16 rounding floor, so no window is needed.
On Huginn-0125 a 64-position window removes most of a concentrated replay-error tail (\BeightTailC{} to
\BeightTailWsixtyfour{} positions over \BeightBPrompts{} prompts), yet in a real self-speculative decoder the policy is slower
and larger than the deepest plane alone and not measurably closer to full-cache decoding. On Ouro-1.4B the prompt lies
outside any 16--64-position window: on 24 GSM8K questions with a 2048-token budget, the full cache and a FlashLoop baseline
finish 23 answers and \sysname{} at most 2, although physical storage matches the emulated policy bitwise. Sensitivity and
ripening models fitted on Huginn and Raven explain their contrast, and Ouro's own report that prompt-side sharing is harmful explains the
third. Separately, bf16 execution-shape effects are as large as the cache effects being measured. Recent-window repair of
shared-plane caches should not be adopted without a per-model ripeness test; we release ten pre-registered batches with
every failed prediction.
\end{abstract}
]

\printAffiliationsAndNotice{}
\hypersetup{pdftitle={Recency Is the Wrong Axis: A Negative Result on Windowed Deepest-Plane KV Caching in Looped Language Models},
pdfsubject={MLSys 2027 submission}}

\section{Introduction}

\begin{figure*}[t]
\centering
\includegraphics{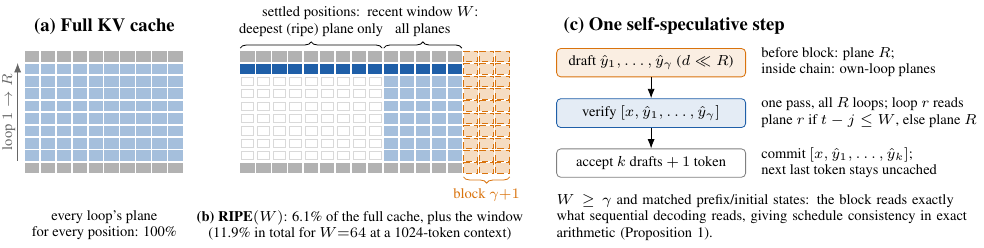}
\caption{Overview of \sysname{} (schematic; grid sizes not to scale). Rows are the prelude (bottom), loops $1,\dots,R$ and
the coda (top) of each prefix position (columns). (a)~A full cache keeps every loop's keys and values. (b)~\sysname{} keeps
only the deepest plane for settled positions and all planes for the $W$ most recent ones; the block being verified
lies inside the window. Memory from Eqs.~\eqref{eq:mem} and~\eqref{eq:ripemem}. (c)~A self-speculative step with
$W=64$, implemented under emulated storage in Section~\ref{sec:ripe-e2e}.}\label{fig:overview}
\end{figure*}

Looped, or recurrent-depth, language models reuse one block of transformer layers many times per token
\citep{huginn,ouro,retrofit}. Because a shallow run of the same weights can serve as a draft, they support
self-speculative decoding without a separate draft model by drafting with $d \ll R$ loops and verifying at full depth
\citep{huginn,loopspec,wavefront,maturespec,loopahead}. Weight reuse keeps parameter storage small, but the KV cache grows
with recurrent depth. Every loop of every recurrent layer writes its own keys and values, so Huginn-0125 stores \MHugFullKB{}\,KB per token, against
\MHugDeepKB{}\,KB if only one plane were kept. At the \MMaxSeq-token sequence length of the batch-6 and batch-8 memory cells, a 95\,GB budget then holds \MBmaxFull{} sequences
with the full cache and \MBmaxShared{} with one plane (Section~\ref{sec:memory}).

Which planes have to be stored? Huginn's own code offers caches that keep fewer planes \citep{huginn}, Ouro reports that
the last loop's KV suffices for its model's decoded tokens, though not for the prompt \citep{ouro}, and WaveFront shows that cross-loop sharing raises
self-speculative speed \citep{wavefront}. Keeping only the deepest plane is the cheapest choice, and its errors invite an
obvious repair: keep every plane for the $W$ most recent positions, where decoding and speculative verification happen, and
let older positions share the deepest plane (Figure~\ref{fig:overview}). The window also makes batched verification
consistent with sequential decoding. We call this policy \sysname{} (\emph{\sysexpansion}), built it, and tested it on three
looped models in ten pre-registered GPU batches, whose decision rules and predictions were frozen before each run, and in
audited reruns on other hardware. Many predictions failed; the supplementary appendix lists all of them.

This paper is the resulting negative result. The window selects positions by recency, but on every model the cost of the
deepest plane is set by how far the late loops' planes have converged, which we call \emph{ripeness} and which is a property of
the model, and by which positions share, which for a prompt of a few hundred tokens is almost all of them. Small post hoc
models (Table~\ref{tab:models}) make this quantitative on Huginn and Raven.

\textbf{Contributions.}
\vspace{-0.6\baselineskip}
\begin{itemize}\setlength{\itemsep}{1pt}\setlength{\parskip}{0pt}
\item \emph{A three-part negative result.} On Raven-Llama the deepest plane is already indistinguishable from bf16
rounding, so the window is unnecessary (Section~\ref{sec:fidelity}). On Huginn-0125 a 64-position window removes most
of a replay-error tail, but the \sysname{} decoder is slower, larger and not measurably closer to the full cache than the
deepest plane alone, so the window is ineffective (Section~\ref{sec:ripe-e2e}). On pretrained Ouro-1.4B the prompt lies
outside the window, and with $W\le64$ the model almost never finishes an answer that the full cache finishes, so the window
is insufficient (Section~\ref{sec:ouro}).
\item \emph{Why: ripeness and position, not recency.} A sensitivity model (KL grows as a power \MGRavFp--\MGHugFp{} of a
prefix-KV perturbation, late loops far more sensitive) and a ripening model (approximately geometric convergence of planes
with a bf16 floor) predict the deepest plane's error to within a factor of two and separate Raven from Huginn
(Sections~\ref{sec:sens},~\ref{sec:ripen}). Neither quantity depends on a position's recency; the window changes how many
positions share, not what sharing costs them.
\item \emph{Replay tails are not decoder quality.} The teacher-forced replay tail that motivated the window falls by
\BeightBoneRed{} on Huginn, while the decoder it motivated shows no detectable difference on 22 problems
(ratio \BrThreeRatio{} [\BrThreeLo, \BrThreeHi]). Per-head position filtering, attention sinks and a second plane do not
help either (Section~\ref{sec:negative}).
\item \emph{What the window does provide, and what it does not.} When $W\ge\gamma$, batched verification under
\sysname{} computes the same values as sequential \sysname{} decoding in exact arithmetic (Section~\ref{sec:ripe}); deepest-plane
drafts keep \BrRecDd{} of the deepest plane's acceptance gain. These concern schedules and speed, not closeness to the full cache. In bf16, execution shape alone changes
trajectories at rates comparable to the cache effects: cuDNN attention makes GH200 decoding non-reproducible, and Ouro's
sequential-versus-batched gate fails in bf16 on two GPUs but passes in fp32 (Sections~\ref{sec:repro},~\ref{sec:ouro}).
\item \emph{Performance models and audited evidence.} Acceptance, cost, memory and throughput models, two checked on
held-out data (Sections~\ref{sec:accept}--\ref{sec:memory}); frozen inputs and independent audits of every rerun.
\end{itemize}

\section{Background}\label{sec:background}

\textbf{Looped models.} A prelude $P$ of $\npre$ layers embeds token $x_t$; a core block $B$ of $\ncore$ layers runs $R$
times on a state initialized at random; a coda $C$ of $\ncoda$ layers produces logits:
\begin{align}
s^{(r)}_t &= B\big(s^{(r-1)}_t,\, P(x_t);\ \mathcal{C}^{(r)}_{<t}\big),\qquad r=1,\dots,R,\\
z_t &= C\big(s^{(R)}_t\big).
\end{align}
At loop $r$, each core layer emits keys and values for token $t$; we call loop $r$'s keys and values over all core layers
\emph{plane} $r$, written $(K^{(r)}, V^{(r)})$. In the full cache, the query of token $t$ at loop $r$ attends to plane $r$
of every prefix position. Huginn-0125 \citep{huginn} has $\npre=\ncoda=2$, $\ncore=4$, $R=32$, and 55 heads of width 96.
Raven-Llama \citep{retrofit} has $\npre=\ncoda=4$, $\ncore=6$, $R=32$, and grouped-query attention with 8 KV heads of
width 64.

\textbf{Cache policies.} A policy $\pi(r,t,j)\in\{1,\dots,R\}$ names the plane of prefix position $j$ that token $t$ reads
at loop $r$ (Figure~\ref{fig:overview}). The full cache has $\pi=r$; the \emph{deepest-plane} cache has $\pi=R$. Huginn's
\texttt{compress-s$S$} keeps $S$ core slots and overwrites them modulo $S$; with $S=\ncore$ the last writer of every slot
is loop $R$, so it is exactly the deepest-plane cache. Huginn's slot map hard-codes $\npre=2$; for Raven we use the
general map $\text{slot}=(\text{step}-\npre)\bmod S+\npre$, checked on CPU against direct substitution. With
$b=2\cdot2\,H_{\mathrm{kv}}\,d_h$ bytes per layer per token in bf16 (K and V, two bytes each), an $L$-token sequence needs
\begin{align}
M_{\mathrm{full}}(L)&=bL\,(\npre+\ncoda+R\,\ncore),\label{eq:mem}\\
M_{\mathrm{deep}}(L)&=bL\,(\npre+\ncoda+\ncore),\notag
\end{align}
that is, \MHugFullKB{} vs.\ \MHugDeepKB{}\,KB per token on Huginn (\MHugDeepFrac) and \MRavFullKB{} vs.\ \MRavDeepKB{}\,KB
on Raven (\MRavDeepFrac).

\textbf{Self-speculative decoding.} A draft runs $\gamma$ tokens through $d$ loops each; one verification pass runs the
block of $\gamma+1$ tokens through all $R$ loops at once, accepts the longest prefix on which draft and verifier agree
(greedy), and adds one token from the verifier. We write $\tau$ for tokens per verification.

\textbf{Measurement modes.} \emph{Teacher-forced prefill} replaces attention with an explicit fp32 softmax, records each
call's keys and values, and substitutes any loop's prefix K/V by another plane; the last 64 positions act as decoded
tokens. \emph{Generation} decodes greedily with the full cache, then replays each trajectory through the arm and through
an fp32 full cache, which sets the precision floor. We report $\KL(\text{reference}\,\|\,\text{arm})$ per position.

\begin{table*}[t]\centering\small
\caption{Post hoc models of this paper and how each was checked. ``Fitted on'' names the data that set its parameters;
``Check'' says whether the comparison data were held out from the fit. Batch numbers refer to the supplementary appendix.}\label{tab:models}
\begin{tabularx}{\textwidth}{@{}l>{\raggedright}p{3.3cm}>{\raggedright}X>{\raggedright}X>{\raggedright}p{1.5cm}>{\raggedright\arraybackslash}p{3.1cm}@{}}\toprule
Model & Form & Fitted on & Checked against & Check & Result\tabularnewline\midrule
KV memory & Eq.~\eqref{eq:mem}, \eqref{eq:ripemem} & architecture only & \MCells{} WaveFront cells; window extras (b.~8) & held out & $\le\MCacheErr$\,GB (print precision); exact\tabularnewline
Peak memory & cache $+\,o_0+o_1B$ & batch-6 cells & batch-8 cells, other node, $B\le256$ & held out & $\le\MPeakOosErr$\tabularnewline
Sensitivity & $\log\KL=a+b[\text{deep}]+g\log\|\delta\|$ & region arms (pre-registered) & second-order theory, $g=2$ & theory & $g=\MGHugFp$ / \MGRavFp{} (fp32)\tabularnewline
Ripening & $\Delta_r=c\rho^r$; bf16 adds $\Delta_\infty$ & fp32 curve, loops 8--31 & fp32 curve; bf16 curve with one more parameter & in sample & max $|\log$ err$|$ fp32 \MResFpHugK{} / \MResFpRavK, bf16 \MResBfHugK{} / \MResBfRavK\tabularnewline
Deepest-plane error & Eq.~\eqref{eq:err} & sensitivity + ripening & deepest-plane cache arm (fp32), not in the fit & partly held out & \(\MErrPredHugFp\) vs.\ \(\MErrSHugFp\) (factor 2)\tabularnewline
Acceptance & $\tau=(1-\alpha^{\gamma+1})/(1-\alpha)$ & batch-3 simulation & batch-8 real decoder & held out & within \MTauOosErr\tabularnewline
Speed-up & Eq.~\eqref{eq:speed} & one factor $\eta$, \MSpTimings{} timings (batch 8) & batch R, another machine: \sysname{} arms (pre-registered), anchors (post hoc) & held out & \BrErrRipe, \BrErrDd; anchors \BrAncErrFull, \BrAncErrSfour\tabularnewline\bottomrule
\end{tabularx}
\end{table*}

\section{Models}\label{sec:models}

All models in this section are post hoc: we built them after the pre-registered batches had run. We label each check by
whether the data it uses entered the fit.

\subsection{Sensitivity: deep loops are more sensitive to prefix KV}\label{sec:sens}

Let $\delta_r$ be the relative perturbation of the prefix keys and values that loop $r$ reads. To first order the logits move
by $\Delta z\approx\sum_r J_r\delta_r$. With $p=\mathrm{softmax}(z)$ and $p'=\mathrm{softmax}(z+\Delta z)$,
\begin{equation}\label{eq:kl}
\KL(p\,\|\,p')=\tfrac12\,\Delta z^{\top}F(p)\,\Delta z+O(\|\Delta z\|^3),
\end{equation}
where $F(p)=\mathrm{diag}(p)-pp^{\top}$. Dropping cross terms, $\KL\approx\sum_r s_r\|\delta_r\|^2$ for small perturbations:
quadratic, weighted by a per-loop sensitivity $s_r$. We fit the exponent rather than fix it at 2.

The pre-registered test of batches 2 and 9 substitutes the prefix K/V of either loops 2--9 (shallow) or loops 25--32 (deep)
by plane $p\in\{1,4,8,16,24,32\}$, measures the size $\|\delta\|$ as the root-sum-square of the relative distances over
loops, layers, keys and values, and regresses $\log\KL$ at the first decoded position on region and $\log\|\delta\|$
(Figure~\ref{fig:mech}a,b). In fp32 the size exponent is \MGRavFp{} [\MGRavFpLo, \MGRavFpHi] on Raven, consistent with the second-order
term, and $g=\MGHugFp$ [\MGHugFpLo, \MGHugFpHi] on Huginn, which excludes 2: Huginn's shallow loops follow the quadratic
(slope \MSlopeShHugFp) but its deep loops are steeper (\MSlopeDpHugFp). The largest substitutions (plane 1) are outside the
small-perturbation regime of Eq.~\eqref{eq:kl}; Raven's shallow region sits at the fp32 resolution. At equal size the deep region is $e^{\BtwoDeepFp}$
[$e^{\BtwoDeepFpLo}$, $e^{\BtwoDeepFpHi}$] (Huginn, \BtwoNfp{} texts) and $e^{\BnineDeepFp}$ [$e^{\BnineDeepFpLo}$,
$e^{\BnineDeepFpHi}$] (Raven, \BnineNfp{} texts) times more sensitive. In bf16 the coefficients are \BtwoDeepBf{} and \BnineDeepBf;
there \MCensHugBf{} and \MCensRavBf{} of the rows sit at the bf16 KL floor, which caps the measurable gap.

The consequence for caching: accuracy of the deep loops' prefix K/V deserves priority. At equal storage, keeping the
deepest plane agrees with the full cache at the first decoded position in \BtwoSfourFirst{} of texts, whereas a window of the
first four loops, the starting plan's choice, reaches \BtwoWinFourFirst{} (batch 2; sharing every 8 or 4 loops gives
\BtwoShareEightFirst{} and \BtwoShareFourFirst).

\begin{figure*}[t]
\centering
\includegraphics[width=\textwidth]{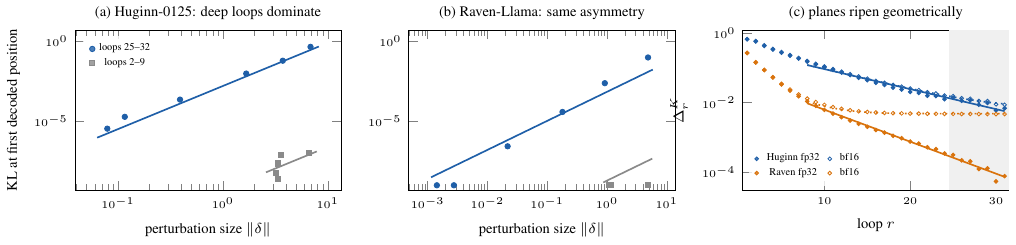}
\caption{Greater sensitivity in late loops, and when the deepest plane is enough. (a,b)~First-decoded-position KL after
substituting the prefix keys and values of loops 2--9 or 25--32 by planes 1--32 (fp32, AlpacaEval; points: arm medians over
texts; lines: the pre-registered regression). At equal perturbation size the deep loops cost $e^{\BnineDeepFp}$--$e^{\BtwoDeepFp}$
times more. (c)~Distance of loop $r$'s prefix keys to loop $R$'s: filled fp32, open bf16; solid $c\rho^r$ fitted to fp32,
dotted with one bf16 floor (Eq.~\eqref{eq:floor}). Raven's deep loops (shaded) sit at the bf16 floor; Huginn's do
not.}\label{fig:mech}
\end{figure*}

\subsection{Ripening: an approximate geometric trend}\label{sec:ripen}

Let $\Delta_r=\|K^{(r)}-K^{(R)}\|/\|K^{(R)}\|$ over prefix positions, averaged over layers and texts, and likewise for
values. Figure~\ref{fig:mech}c shows that in fp32, $\Delta_r\approx c\rho^r$ from loop 8 on (on Raven the rate drifts upward
with $r$; largest log residual \MResFpRavK). Huginn ripens at
$\rho=\MRhoHugK$ per loop (half-life \MHalfHugK{} loops) and Raven at $\rho=\MRhoRavK$ (\MHalfRavK{} loops); values
behave the same way ($\rho=\MRhoHugV$ and \MRhoRavV). In bf16 the curve flattens at a rounding floor,
\begin{equation}\label{eq:floor}
\Delta^{\mathrm{bf16}}_r\approx\sqrt{(c\rho^r)^2+\Delta_\infty^2},
\end{equation}
and this one extra parameter reproduces the bf16 curve ($\Delta_\infty=\MFloorHugK$ for Huginn's keys, \MFloorRavK{} for
Raven's). Over the deep loops 25--31, Huginn's keys are still \MDeepBfHugK{} away from plane $R$ in bf16, above its floor,
while Raven's are at its floor (\MDeepBfRavK); in fp32 Raven's are \MDeepRatioFpK{} times closer than Huginn's.

\textbf{Deepest-plane error.} Storing only plane $R$ sets $\delta_r=\Delta_r$ for every $r<R$. With the sensitivity profile of
Section~\ref{sec:sens}, we approximate the total by the deep-loop contribution, neglecting other loops and cross terms.
The fitted region model gives
\begin{equation}\label{eq:err}
\KL_{\mathrm{deep}}\approx e^{a+b}\,S^{g},\qquad S^2=\ncore\sum_{r=25}^{31}\big(\Delta^{K\,2}_r+\Delta^{V\,2}_r\big).
\end{equation}
For Huginn in fp32 this predicts \(\MErrPredHugFp\) at the first decoded position from the ripening curve, or
\(\MErrSizeHugFp\) from the measured perturbation size; the deepest-plane cache, which the fit did not use, measures
\(\MErrSHugFp\) (median). The agreement is to within a factor of two, not better. Substituting plane $R$ in the shallow loops
alone costs only \(\MErrShallowHugFp\). For Raven the model predicts \(\MErrPredRavFp\) (\(\MErrSizeRavFp\) from the measured
size), \MErrRatio--\MErrRatioSize{} times less than for Huginn, and the measurement is at the fp32 resolution (median below
$10^{-9}$, mean \(\MErrSmeanRavFp\)).

\begin{remark}[ripeness test, a hypothesis]
If the deep loops' $\Delta_r$ is at the bf16 floor, deepest-plane storage perturbs those loops about as much as rounding
already does, and its cost should be indistinguishable from a precision change; otherwise it should exceed it. Both models
fit this reading (Section~\ref{sec:fidelity}), but two models cannot establish it. The test needs one prefill per prompt
and no generation.
\end{remark}
\subsection{Acceptance}\label{sec:accept}

If each draft token is accepted with probability $\alpha$, independently of its position in the block, the tokens per
verification follow \citep{leviathan,chen}
\begin{equation}\label{eq:tau}
\tau(\alpha,\gamma)=\frac{1-\alpha^{\gamma+1}}{1-\alpha}.
\end{equation}
Batch 3 simulated greedy self-speculation on \BthreeN{} GSM8K problems with $\gamma=4$ and recorded acceptance at each
draft position given that all earlier ones were accepted. It is flat: across positions it varies by at most
\MSpreadFull{} (full cache) and \MSpreadDeep{} (deepest plane) for $d\in\{2,4,8\}$ (Figure~\ref{fig:spec}a), and
Eq.~\eqref{eq:tau} with the mean $\alpha_d$ matches the simulated $\tau$ within \MTauErrMax. Out of sample, the
batch-3 values predict $\tau=\MTauPredFullOos$ (full) and \MTauPredDeepOos{} (deepest plane) at $d=\gamma=4$; batch 8's
real batched-verify decoder measured \BeightTpvFull{} and \BeightTpvSfour{} on \BeightMetricsGsm{} other GSM8K problems and \MTauTimFull{}
and \MTauTimDeep{} on \BeightTPrompts{} more. Drafts that read the deepest plane accept $\alpha_4=\MAlphaDeepFour$ instead of
\MAlphaFullFour, and their rejection rate falls faster with draft depth: $1-\alpha_d\propto\rho_a^d$ with
$\rho_a=\MRejRhoDeep$ against \MRejRhoFull.

\subsection{Cost}\label{sec:cost}

We assume that at batch size 1 our eager decoder is bound by kernel launches and weight reads, so that a forward costs in
proportion to the layer passes it runs, $c(d)=\npre+d\,\ncore+\ncoda$, almost regardless of how many tokens it carries. A sequential token
costs $c(R)=\MCfull$ passes; a block costs $\gamma\,c(d)+c(R)$, scaled by one overhead factor $\eta$ for everything else.
The speed-up over sequential decoding is then
\begin{equation}\label{eq:speed}
S(d,\gamma)=\frac{\tau(\alpha_d,\gamma)\;c(R)}{\eta\,\big(\gamma\,c(d)+c(R)\big)}.
\end{equation}
With $\eta=1$ and the measured block counts, Eq.~\eqref{eq:speed} gives \MSpZeroFull$\times$ and \MSpZeroDeep$\times$
against measured \MSpMeasFull$\times$ and \MSpMeasDeep$\times$ (\MSpPrompts{} problems, $d=\gamma=4$). One factor
$\eta=\MEta$, fitted to all \MSpTimings{} per-problem timings at once, gives \MSpOneFull$\times$ and \MSpOneDeep$\times$, with
per-problem errors up to \MSpPromptErr. Sequential decoding is equally fast with either cache (ratio \BeightSeqRatio), so the gain of
the deepest plane is entirely its higher acceptance. With $\eta$ fixed at this value, Eq.~\eqref{eq:speed} predicts the two
\sysname{} decoders of Section~\ref{sec:ripe-e2e}, run later on another GH200 machine, at \BrPredRipe$\times$ and
\BrPredDd$\times$ over sequential \sysname; they measured \BrMeasRipe$\times$ and \BrMeasDd$\times$. Applied post hoc to
the full-cache and deepest-plane decoders of the same run, it is off by \BrAncErrFull{} and \BrAncErrSfour.

Figure~\ref{fig:spec}b extrapolates Eq.~\eqref{eq:speed} over $\gamma$ (flat acceptance was tested only to $\gamma=4$, and
only $d=\gamma=4$ was timed): deepest-plane drafts raise the speed-up and move the best block length from
$\gamma=\MDesignFullFourGamma$ (\MDesignFullFourS$\times$) past $\gamma=\MDesignDeepFourGamma$ (\MDesignDeepFourS$\times$).

\subsection{Memory and fixed-memory throughput}\label{sec:memory}

WaveFront's batched emulator \citep{wavefront} allocates exactly Eq.~\eqref{eq:mem}: over all \MCells{} cells of batches 6
and 8 (sequences of \MMaxSeq{} tokens; 0, 1 or 4 shared core slots), the reported cache size matches the formula to its
printed precision. Peak reserved memory adds $o_0+o_1B$, with $o_0=\MOzero$\,GB and $o_1=\MOoneFull$\,GB (full) or
\MOoneShared\,GB (one slot) per sequence, fitted on batch 6. This predicts the \MPeakOosN{} batch-8 cells, which ran on
another node and up to $B=256$, within \MPeakOosErr{}. Under the pre-registered 95\,GB budget, the largest batch is
$B_{\max}=\MBmaxFull$ with the full cache (\MPerSeqFull\,GB per sequence) and \MBmaxShared{} with one slot.

Throughput per configuration follows $T(B)=B/(a+bB)$ (largest residual \MTLossSixRes{} and \MTShSixRes); the full cache
saturates at $T_\infty=\MTLossSixInf$ tokens/s and one shared slot at \MTShSixInf{} (batch 6). The fixed-memory advantage of
sharing has two factors (Figure~\ref{fig:spec}c): at the full cache's best batch ($B=\MDecBeq$) the shared slot is already
\MDecEq$\times$ faster, and memory then lets it run $B=\MDecSharedB$, another \MDecBatch$\times$, for \MDecTotal$\times$ in all
(Section~\ref{sec:fixedmem}).

\section{The policy under test: \sysname}\label{sec:ripe}

\sysname{} (\emph{\sysexpansion}) reads the deepest plane for old positions and the own
plane for the $W$ most recent ones:
\begin{equation}
\pi_{\textsc{\sysname}}(r,t,j)=\begin{cases}r,& t-j\le W,\\ R,& t-j>W,\end{cases}
\end{equation}
\begin{equation}\label{eq:ripemem}
M_{\textsc{\sysname}}(L)=M_{\mathrm{deep}}(L)+b\,\ncore\,(R-1)\min(W,L).
\end{equation}
The accounting assumes a deepest-plane array for all positions plus a ring buffer of $W\times(R-1)$ core planes, which the reported GPU experiments do not allocate; a
position leaving the window drops its non-deepest planes. At $L=1024$, $W=64$ costs \MHugRipeSixtyFourMB\,MB on Huginn
(\MHugRipeSixtyFourFrac{} of \MHugFullMB\,MB) and \MRavRipeSixtyFourMB\,MB on Raven (\MRavRipeSixtyFourFrac), i.e.\ about
\MHugRipeSixtyFourExtra\,MB per Huginn sequence over the deepest-plane cache (Table~\ref{tab:mem}). These are calculated
values: the Huginn decoder emulates storage (Section~\ref{sec:ripe-e2e}), while the Ouro runs use a physical ring buffer
(Section~\ref{sec:ouro}; CPU checks in the supplementary appendix).

\begin{table}[t]\centering\small
\caption{KV memory per sequence at $L=1024$ from Eqs.~\eqref{eq:mem} and~\eqref{eq:ripemem} (bf16, MB $=10^6$ bytes), and
its fraction of the full cache.}\label{tab:mem}
\begin{tabular}{@{}lrrrr@{}}\toprule
 & \multicolumn{2}{c}{Huginn-0125} & \multicolumn{2}{c}{Raven-Llama}\\\cmidrule(lr){2-3}\cmidrule(l){4-5}
Cache & MB & frac. & MB & frac.\\\midrule
full & \MHugFullMB & 100\% & \MRavFullMB & 100\%\\
deepest plane & \MHugDeepMB & \MHugDeepFrac & \MRavDeepMB & \MRavDeepFrac\\
two planes & \MHugTwoMB & \MHugTwoFrac & \MRavTwoMB & \MRavTwoFrac\\
\sysname, $W=16$ & \MHugRipeSixteenMB & \MHugRipeSixteenFrac & \MRavRipeSixteenMB & \MRavRipeSixteenFrac\\
\sysname, $W=64$ & \MHugRipeSixtyFourMB & \MHugRipeSixtyFourFrac & \MRavRipeSixtyFourMB & \MRavRipeSixtyFourFrac\\
\sysname, $W=256$ & \MHugRipeTwoFiveSixMB & \MHugRipeTwoFiveSixFrac & \MRavRipeTwoFiveSixMB & \MRavRipeTwoFiveSixFrac\\\bottomrule
\end{tabular}
\end{table}

\begin{algorithm}[t]
\caption{One \sysname{} self-speculative step (greedy)}\label{alg:ripe}
\begin{algorithmic}[1]
\STATE {\bfseries Input:} last token $x$ at position $t_0$ (not yet cached); $\mathcal C$ for positions $j<t_0$ (plane $R$
everywhere, all planes for $t_0-j\le W$); draft depth $d$, block length $\gamma\le W$
\FOR{$i=1$ {\bfseries to} $\gamma$}
\STATE $\hat y_i\leftarrow\arg\max$ of a $d$-loop pass on $\hat y_{i-1}$ ($\hat y_0=x$); read plane $R$ before $t_0$,
own-loop planes within the draft chain
\ENDFOR
\STATE $z\leftarrow$ one $R$-loop pass on $[x,\hat y_1,\dots,\hat y_\gamma]$; loop $r$ of token $t$ reads plane $r$ of $j$ if
$t-j\le W$, else plane $R$
\STATE $k\leftarrow\max\{i:\hat y_{1..i}=\arg\max z_{0..i-1}\}$; emit $\hat y_{1..k}$ and $\arg\max z_k$
\STATE commit verified KV for $[x,\hat y_1,\dots,\hat y_k]$ only; invalidate the remaining draft suffix
\STATE $x\leftarrow\arg\max z_k$, $t_0\leftarrow t_0+k+1$; keep $x$ uncached; drop non-$R$ planes where $t_0-j>W$
\end{algorithmic}
\end{algorithm}
Algorithm~\ref{alg:ripe} describes a nonterminal step; stopping rules may truncate the emitted sequence.
The correction or bonus token is processed at the next step. In the emulator, invalidation is logical: the next write
position excludes the unused suffix from attention, and later passes overwrite it.

\begin{proposition}[sequential self-consistency under \sysname]\label{prop:bv}
Consider a causal model whose operations use the same per-position initialization and do not couple a token to future
positions. Let the committed prefix state be identical in both executions, stochastic layers be disabled, and the block
pass retain every prefix plane needed by its queries until those queries finish. If $W\ge\gamma$, verification of a fixed
block of $\gamma+1$ tokens under $\pi_{\textsc{\sysname}}$ computes, in exact arithmetic, the same logits and cached keys and
values as processing that block sequentially under the same policy.
\end{proposition}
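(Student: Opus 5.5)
The plan is an induction over the block tokens and, inside each token, over loops and layers. The induction shows that every query in the block pass sees exactly the same set of (key, value) pairs as it would in the sequential execution. Write the block as tokens at positions $t_0,\dots,t_0+\gamma$.

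\emph{Sequential execution.} Token $t_0+i$ is processed after tokens $t_0,\dots,t_0+i-1$ have been committed with all their planes. The within-window planes of these tokens are retained because $i\le\gamma\le W$.

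\emph{Block execution.} All $\gamma+1$ tokens run loop by loop, with causal masking.

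The inductive claim is stated at the level of a single layer. For every loop $r$, every core layer $\ell$ and every block position $t$, the hidden state entering layer $\ell$ at loop $r$ is identical in the two executions, and hence so are the emitted $K^{(r)}_\ell, V^{(r)}_\ell$ at $t$. Prelude and coda layers are handled the same way. For them there is no plane choice, since the policy applies only to core planes, so they are ordinary causal attention.

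Order the pairs $(r,\ell)$ lexicographically, with prelude first and coda last. At a fixed pair, token $t$ in the block pass attends to positions $j\le t$, split into three groups.
\begin{enumerate}
\item \emph{Committed positions $j<t_0$.} Both executions read plane $\pi_{\textsc{\sysname}}(r,t,j)$ of the identical committed prefix state. In the block pass this plane is still present: the hypothesis that the block retains every needed prefix plane until its queries finish rules out eviction mid-pass. The sequential run does evict when $t_0-j>W$, but such a $j$ has $t-j>W$ as well, so it is read at plane $R$, which is always stored. Positions with $t-j\le W$ but $t_0-j>W$ cannot exist at the moment $t$ is processed sequentially, because eviction uses the current position. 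This boundary case needs to be checked carefully against the convention in Algorithm~\ref{alg:ripe}.
\item \emph{Earlier block positions $t_0\le j<t$.} Here $t-j\le\gamma\le W$, so the policy prescribes plane $r$ of $j$. In the block pass this is the key/value computed at the same $(r,\ell)$ in the current sweep. By the induction hypothesis at $(r,\ell)$ for the smaller position $j$ (inner induction on $t$), or simply because $j$'s input state at $(r,\ell)$ agrees, these equal the sequentially cached ones. This is exactly where $W\ge\gamma$ is used: without it, the sequential run would read plane $R$ of $j$, which the block pass has not yet computed at loop $r<R$.
\item \emph{The token itself.} Its key/value is computed from its own input state, which agrees by hypothesis.
\end{enumerate}

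Since the softmax attention output depends only on the query and the multiset of visible key/value pairs, the outputs agree. Per-position operations then agree because they share the same initialization and have no stochasticity. Feed-forward and norm layers are positionwise, so the claim propagates to the next $(r,\ell)$. The base case is the random initial state $s^{(0)}_t$ together with the prelude input $P(x_t)$; these agree by the shared per-position initialization. The hypothesis that nothing couples a token to future positions ensures that the block pass's extra tokens $t'>t$ never influence $t$.

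I expect the main obstacle to be the bookkeeping in the first two groups. One must show that the plane the policy assigns to each pair $(t,j)$ is actually available and identical in both executions at the moment it is read. Dependencies of the form ``loop $r$ of $t$ depends on loop $r$ of $j<t$'' must also not create cycles in the block sweep. Formalizing the dependency order as a DAG over $(r,\ell,t)$ and checking that the policy only references nodes earlier in that order (and, for plane $R$ at $j<t_0$, only committed ones) should settle it. At the end, the logits $C(s^{(R)}_t)$ and all committed caches coincide.
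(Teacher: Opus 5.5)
Your proposal is correct and follows essentially the paper's own argument. It is an induction over loops, which you refine to individual layers, and causal positions: earlier block positions are read at plane $r$ because $t-j\le\gamma\le W$, and committed positions either satisfy $t_0-j\le t-j\le W$ (so their loop-$r$ plane was retained at block entry) or are read at the always-stored plane $R$; the boundary case you flag needs only $t\ge t_0$ together with the rule in Algorithm~\ref{alg:ripe} that drops non-$R$ planes only where $t_0-j>W$, which is how the paper closes it.
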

\begin{proof}[Proof sketch]
Use induction over recurrent loops and, within each loop, causal positions. The two schedules start from the same prefix
and per-position state. For a block query at $t$, each earlier block position $j$ has $t-j\le\gamma\le W$, so both
schedules read its loop-$r$ plane; it is available from the current loop in the block pass and from the cache in the
sequential pass. A committed prefix position $j<t_0$ read at loop $r$ with $t-j\le W$ also satisfies $t_0-j\le W$, so its
loop-$r$ plane was retained at block entry. Otherwise both schedules read its stored plane $R$. By the induction
hypothesis the inputs to each causal operation agree, so its output and emitted K/V agree. The coda then produces equal
logits. In particular, the committed K/V for an accepted prefix agree.
\end{proof}
This compares execution schedules of the same approximate cache; it says nothing about their distance from full-cache
decoding. In floating point, kernel shapes and reduction orders change logits, so it does not guarantee bitwise identity.

The deepest-plane cache ($W=0$) violates the premise: a block token cannot read plane $R$ of an earlier block token, which
does not exist yet, so batched verification reads plane $r$ there, and outputs depend on where blocks start.
Section~\ref{sec:spec} measures how much this matters on Huginn.

\textbf{Drafts and the choice of $W$.} Under the proposition's assumptions, greedy verification preserves the sequential
trajectory of the same \sysname{} target regardless of the draft tokens; this is not a full-cache equivalence claim. Reading
plane $R$ gives drafts the higher acceptance of Section~\ref{sec:accept}. The block length sets $W\ge\gamma$; we use $W=64$, the
window that reduced Huginn's replay tail (Section~\ref{sec:fidelity}), as the tested operating point.

\section{Evaluation}\label{sec:eval}

\textbf{Protocol.} In the experimental records, each batch had a pre-registration with decision rules and predictions, a
frozen file table (SHA-256), on-node CPU self-tests, and an archive verified on both ends. Separate unscored Nsight Compute
passes were planned; batch R's pass was skipped. All runs used one NVIDIA GH200 (96\,GB), batch R on a second GH200 machine, with PyTorch 2.14 and CUDA 13.0,
bf16 and greedy decoding unless stated, and cuDNN attention disabled from batch 7 on (Section~\ref{sec:repro}). Prompts
are GSM8K 8-shot problems \citep{gsm8k} and AlpacaEval instructions or reference texts \citep{alpacaeval}. Huginn uses HF
revision bb6621b6 and Raven revision a5f6f126, both with the official modeling code unchanged. The decoder timings below
use one request at a time; queueing, continuous batching and cache allocation are not measured.
The supplementary appendix lists the batches and the artifact's scope. Frozen inputs and a reconstructed runtime
recipe now accompany the Huginn decoder and window-replay entry points. The pre-registration files remain outside
the snapshot, so claims about their timing rely on the authors' record.
Figure~\ref{fig:experiment-overview} summarizes the three measurements that frame the rest of this section.

\begin{figure*}[t]
\centering
\includegraphics[width=\textwidth]{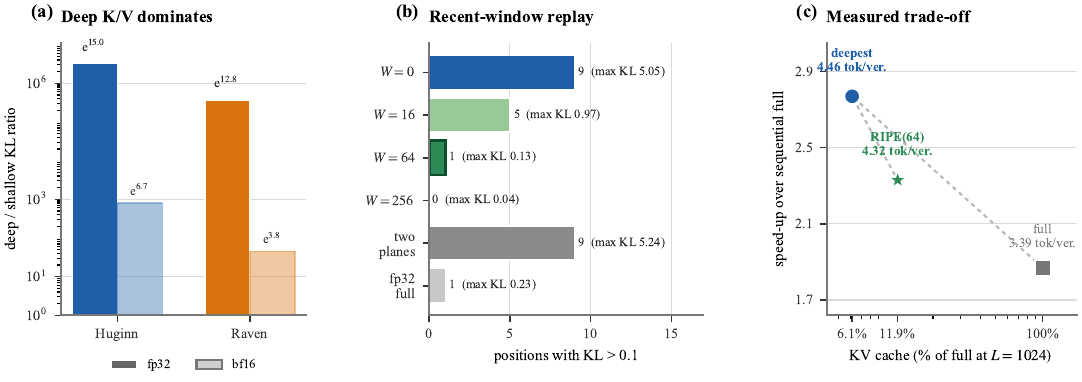}
\caption{Experimental summary. (a) At matched perturbation size, deep-loop K/V is far more sensitive than shallow K/V;
the bf16 floor compresses the gap. (b) In 48-prompt Huginn replay, a 64-position window reduces the number of positions
with KL $>0.1$ from 9 to 1, whereas storing two planes does not. (c) Measured batch-size-one speed-up against calculated
cache footprint. \sysname(64) retains most of the deepest-plane acceptance, but is slower and uses more cache; on the
22-prompt decoder run, its target is not measurably closer to full-cache generation than the deepest plane.}
\label{fig:experiment-overview}
\end{figure*}

\subsection{Generation fidelity and the tail}\label{sec:fidelity}

\begin{figure*}[t]
\centering
\includegraphics[width=\textwidth]{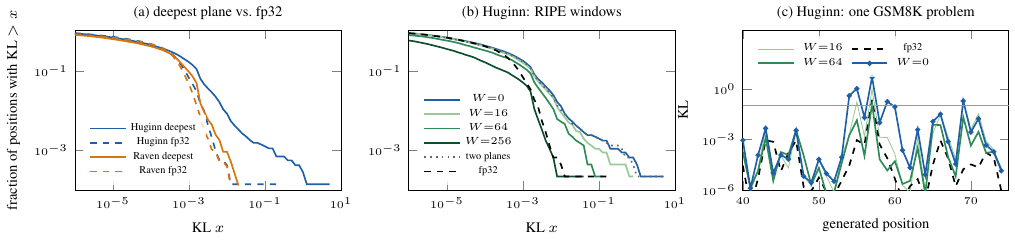}
\caption{Error of deepest-plane caches in generation (teacher-forced replay of full-cache trajectories). (a)~Survival
function of per-position KL, \MTailPositions{} Huginn and \MRavPositions{} Raven positions on the same 72 prompts.
(b)~\sysname{} windows on the 48 Huginn prompts of batch 8. (c)~KL along the GSM8K problem that holds \MTailTop{} of
Huginn's \MTailN{} tail positions.}\label{fig:tail}
\end{figure*}

\textbf{Huginn: mostly a precision change, plus a tail.} The pre-registered statistic is
$\KL(L_0\|\text{deepest})/\KL(L_0\|\text{fp32})$ per bin of generated position with a prompt-paired bootstrap. On 48
confirmatory prompts the label is ``\BsevenConeLabel'', through a single bin (ratio \BsevenConePeak{}
[\BsevenConePeakLo, \BsevenConePeakHi]) in which one problem carries \BsevenPeakShare{} of the KL; other bins have ratios
\BsevenBinMin--\BsevenBinMax. Over all positions the median KL is lower for the deepest plane than for fp32
(\(\BsevenMedSfour\) vs.\ \(\BsevenMedFp\)) but the mean is \BsevenMeanRatio{} times higher, because 1\% of positions carry
\BsevenShareSfour{} of the KL (fp32: \BsevenShareFp; Figure~\ref{fig:tail}a). The \MTailN{} positions with KL above 0.1
(fp32: \MTailFpHug) lie in \MTailPrompts{} prompts, \MTailTop{} of them in one GSM8K problem at nearby positions
(Figure~\ref{fig:tail}c). Only \MTailLowMargin{} of them have a reference top-2 probability margin below 0.1, where
\MLowMarginFrac{} of all positions lie: near-ties are over-represented (\MTailLowExp{} expected, \MTailLowOver$\times$)
but account for a minority of the tail. The largest value, \BsevenMaxSfour, is one of those two.

\textbf{Raven: indistinguishable from a precision change.} On the same 72 prompts, Raven's deepest plane is
``\BnineConeLabel'' on both prompt sets (bin ratios \BnineBinMin--\BnineBinMax), its largest KL is \BnineMaxSsix{}
(fp32: \BnineMaxFp), no position exceeds 0.1, and it flips the top-1 token at \BnineFlipSsix{} positions against
\BnineFlipFp{} for fp32. We had predicted the opposite, a larger effect than on Huginn; the ripening model accounts for the
difference after the fact (Section~\ref{sec:ripen}).

\textbf{\sysname{} windows remove most of Huginn's tail.} Batch 8 replayed 48 of these prompts through \sysname{} caches
(Table~\ref{tab:window}, Figure~\ref{fig:tail}b). The pre-registered count of tail positions falls from \BeightTailC{} to
\BeightTailWsixtyfour{} at $W=64$ (\BeightBoneRed{} fewer; ratio \BeightBoneRatio{} [\BeightBoneLo, \BeightBoneHi], wide
because four prompts carry all tail positions), and the largest KL from \BeightMaxC{} to \BeightMaxWsixtyfour, below the
fp32 replay's \BeightMaxF. On the problem of Figure~\ref{fig:tail}c, $W=16$ already cuts the peak to \MTraceWsixteen{} and
$W=64$ to \MTraceWsixtyfour. Keeping four attention-sink positions \citep{attentionsinks} at full depth does nothing, and
neither does a second stored plane at similar memory. A post hoc prompt-level audit finds that the affected prompts fall
from \ReplayTailPromptsC/\BeightBPrompts{} to \ReplayTailPromptsWsixtyfour/\BeightBPrompts{};
leaving out any one prompt gives a tail-count reduction of \ReplayLooMin--\ReplayLooMax.
This is a sensitivity diagnostic, not a confidence interval or evidence of generalization.
The \BeightBPrompts{} reference trajectories were bitwise identical
to batch 7's, on another node. This replay result is what motivated the window; Section~\ref{sec:ripe-e2e} tests whether it
survives in a decoder.

\begin{table}[t]\centering\small
\caption{\sysname{} windows on Huginn (batch 8, \BeightBPrompts{} prompts, \ReplayPositions{} positions).
Tail: positions with KL$>0.1$; texts: prompts with any tail event. Extra: memory over
the deepest-plane cache at $L=1024$ (Eq.~\eqref{eq:ripemem}).}\label{tab:window}
\setlength{\tabcolsep}{5pt}
\begin{tabular}{@{}lrrrrr@{}}\toprule
Cache & Tail & Texts & Max KL & Flips & Extra MB\\\midrule
deepest ($W=0$) & \BeightTailC & \ReplayTailPromptsC & \BeightMaxC & \BeightFlipC & 0\\
\sysname, $W=16$ & \BeightTailWsixteen & \ReplayTailPromptsWsixteen & \BeightMaxWsixteen & \BeightFlipWsixteen & \MHugRipeSixteenExtra\\
\sysname, $W=64$ & \BeightTailWsixtyfour & \ReplayTailPromptsWsixtyfour & \BeightMaxWsixtyfour & \BeightFlipWsixtyfour & \MHugRipeSixtyFourExtra\\
\sysname, $W=256$ & \BeightTailWtwofivesix & \ReplayTailPromptsWtwofivesix & \BeightMaxWtwofivesix & \BeightFlipWtwofivesix & \MHugRipeTwoFiveSixExtra\\
$+$ 4 sinks & \BeightTailKfour & \ReplayTailPromptsKfour & \BeightMaxKfour & \BeightFlipKfour & \MHugSinkExtra\\
two planes (\texttt{s8}) & \BeightTailSeight & \ReplayTailPromptsSeight & \BeightMaxSeight & \BeightFlipSeight & \MHugTwoExtra\\
fp32 full & \BeightTailF & \ReplayTailPromptsF & \BeightMaxF & \BeightFlipF & ---\\\bottomrule
\end{tabular}
\end{table}

\subsection{Self-speculative decoding}\label{sec:spec}

\begin{figure*}[t]
\centering
\includegraphics{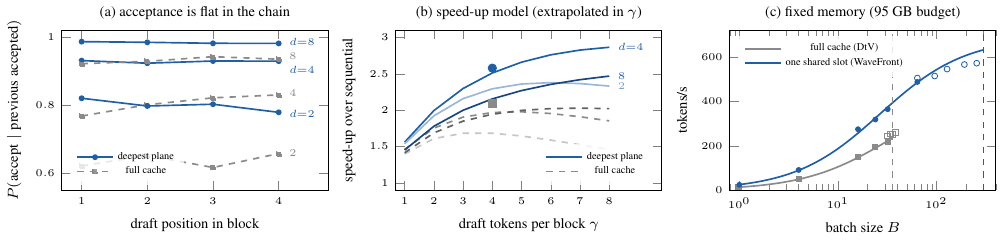}
\caption{(a)~Acceptance at each draft position given that earlier ones were accepted (batch 3, Huginn, GSM8K); flatness
supports Eq.~\eqref{eq:tau}. (b)~Eq.~\eqref{eq:speed} with $\eta=\MEta$ (lighter: $d=2$, darker: $d=8$; dashed: full cache); markers: measured speed-ups of the real decoder at
$d=\gamma=4$ (batch 8). Curves beyond $\gamma=4$ are extrapolations. (c)~Throughput of WaveFront's batched emulator at
measured acceptance rates; filled: batch 6, open: batch 8 (another node, reported separately); lines: $T(B)=B/(a+bB)$ fitted
to batch 6, drawn up to the memory model's $B_{\max}$ at 95\,GB (dashed).}\label{fig:spec}
\end{figure*}

Batch 8 ran a real greedy self-speculative decoder on Huginn: depth-4 drafts, four per block, and one 32-loop verification
pass per block. With the full cache it accepts \BeightTpvFull{} tokens per verification on GSM8K and is
\BeightSpeedFull$\times$ faster than sequential decoding; with the deepest plane, \BeightTpvSfour{} and
\BeightSpeedSfour$\times$ (Sections~\ref{sec:accept}--\ref{sec:cost} model both).

With the deepest plane, batched verification depends on where blocks start (Section~\ref{sec:ripe}). The pre-registered
test compares verification targets of batched and sequential deepest-plane decoding at identical contexts, against the
same comparison for the full cache, where the two are equal in exact arithmetic. A tiny CPU model checks the implementation
within numerical tolerance; the GPU comparison supplies the bf16 numerical baseline. The ratio is \BeightAoneRatio{} [\BeightAoneLo, \BeightAoneHi], below the pre-registered threshold
of 2 (label ``\BeightAoneLabel''; we had predicted 3--30), and \BeightBvSfourSame{} of \BeightAPrompts{} batched
trajectories equal the sequential ones. On Huginn the block dependence is therefore small; Proposition~\ref{prop:bv}
removes it by construction.

Using WaveFront's released code \citep{wavefront} unchanged, batch 5 reproduced its Huginn numbers on GH200: speed-up
\BfiveSpeedZero$\times$ at acceptance \BfiveAccZero{} without sharing and \BfiveSpeedOne$\times$ at \BfiveAccOne{} with one
shared slot.

\subsection{\sysname{} under emulated storage}\label{sec:ripe-e2e}

\begin{table}[t]\centering\small
\caption{\sysname(64) inside a batch-size-one batched-verify decoder (batch R, Huginn, GSM8K, $d=\gamma=4$, another
GH200 machine). Tokens per verification from \BrPrompts{} problems; speed-up over sequential full-cache decoding from
\BrTimingPrompts{} problems run alone on the GPU. \sysname's speed-ups include the cost of emulating its storage but not a
measured reduction in cache allocation (Section~\ref{sec:ripe-e2e}).}\label{tab:ripe-e2e}
\begin{tabular}{@{}lcc@{}}\toprule
Decoder & Tokens/verify & Speed-up\\\midrule
full cache & \BrTauFull & \BrSpeedFull$\times$\\
deepest plane ($W=0$) & \BrTauSfour & \BrSpeedSfour$\times$\\
\sysname(64), drafts follow $\pi_{\textsc{\sysname}}$ & \BrTauRipe & \BrSpeedRipe$\times$\\
\sysname(64), drafts read plane $R$ & \BrTauDd & \BrSpeedDd$\times$\\\bottomrule
\end{tabular}
\end{table}

Batch R implements Algorithm~\ref{alg:ripe} with a per-query window and runs it, pre-registered, on the prompts of batch~8
(Table~\ref{tab:ripe-e2e}). Storage is emulated: the cache still holds every plane and attention reads them by
$\pi_{\textsc{\sysname}}$, so the batch measures decoding behaviour and batch-size-one timing, not memory or serving
throughput (Eq.~\eqref{eq:ripemem} gives the calculated memory).
On a tiny CPU model, batched and sequential \sysname{} produce identical token trajectories when $W\ge\gamma$;
their stored log-probabilities agree within the test's $10^{-10}$ absolute tolerance. The tests use fp64 model arithmetic
but cast logits to fp32, so they check the implementation rather than prove exact equality. A $W<\gamma$ case gives
different rows. On the GPU the observed difference is compatible with the bf16 numerical baseline: the batched-versus-sequential
ratio is \BrOneRatio{} [\BrOneLo, \BrOneHi] times the full cache's floor (label ``\BrOneLabel''), and \BrDdEqSeq{} of
\BrPrompts{} trajectories equal sequential \sysname.

Drafts decide the acceptance. Drafts that follow $\pi_{\textsc{\sysname}}$ read shallow, unripe planes for the recent window and
accept like the full cache (\BrTauRipe{} tokens per verification); drafts that read plane $R$ keep \BrRecDd{}
[\BrRecDdLo, \BrRecDdHi] of the deepest plane's gain (\BrTauDd{} against \BrTauSfour, label ``\BrRecDdLabel''). That decoder
runs \BrSpeedDd$\times$ faster than sequential full-cache decoding. Sequential \sysname{} runs at \BrSeqRatio{} of
sequential full-cache speed because the emulation assembles planes on every call; against it, the decoder measures
\BrMeasDd$\times$, where the cost model of Section~\ref{sec:cost} predicts \BrPredDd$\times$ (pre-registered check). Post
hoc, the same model predicts the full-cache and deepest-plane decoders of this run at \BrAncPredFull$\times$ and
\BrAncPredSfour$\times$ over sequential full-cache decoding against measured \BrAncMeasFull$\times$ and
\BrAncMeasSfour$\times$, so its error on this machine is up to \BrAncErrMax. Along their own trajectories, the verification targets of \sysname(64) and of the deepest plane are
not distinguishably different in their distance from the full cache on these \BrPrompts{} problems (ratio \BrThreeRatio{} [\BrThreeLo, \BrThreeHi], label
``\BrThreeLabel''; we had predicted closer), with \BrTailDd{} against \BrTailSfour{} positions above KL 0.1 but a larger
maximum (\BrMaxDd{} against \BrMaxSfour). \sysname{} is also slower (\BrSpeedDd$\times$ versus \BrSpeedSfour$\times$) and
larger in calculated storage (\MHugRipeSixtyFourMB\,MB versus \MHugDeepMB\,MB at $L=1024$).
The confidence interval includes both a reduction and an increase; it does not establish equivalence. These KL values
compare each arm to a full-cache reference on that arm's own contexts, not on one common trajectory, and are not a
task-accuracy equivalence test. The replay tail reduction therefore has not translated into an established end-to-end
quality gain: on Huginn the window is ineffective as a repair, and it costs speed (\BrSpeedDd$\times$ against
\BrSpeedSfour$\times$) and \MHugRipeSixtyFourExtra\,MB per sequence of calculated storage. The run stopped after \BrPrompts{} of \BrPlanned{} problems,
before AlpacaEval, and skipped its Nsight Compute pass.

\subsection{Pretrained Ouro: the prompt lies outside the window}\label{sec:ouro}

\begin{table}[t]\centering\small
\caption{Ouro-1.4B, 24 reused GSM8K questions with the chat template, greedy, stop only at EOS or 2048 tokens; each
question's four methods ran on one GPU (12 questions on two L40S, 12 on two RTX PRO 6000). Repeating: capped outputs whose
last 512 tokens have eight-gram diversity below 0.05. Last number: post hoc, EOS outputs only; not an accuracy over the
sample. Memory budgets are not matched.}\label{tab:ouro}
\setlength{\tabcolsep}{3.5pt}
\begin{tabular}{@{}lcccc@{}}\toprule
Cache & EOS/24 & Repeat. & Med.\ tokens & Last no.\\\midrule
full & 23 & 0 & 533 & 21/23\\
\sysname, $W=16$ & 0 & 23 & 2048 & ---\\
\sysname, $W=64$ & 2 & 19 & 2048 & 2/2\\
FlashLoop (KIVI) & 23 & 0 & 515 & 23/23\\\bottomrule
\end{tabular}
\end{table}

Ouro-1.4B \citep{ouro} has $R=4$ loops over 24 layers and its own report on KV sharing: for decoded tokens, keeping the last
loop's KV is nearly lossless, but sharing it across the prompt costs more than ten GSM8K points. We ran \sysname{} on the
pinned official Ouro classes through a physical ring-buffer implementation of Eq.~\eqref{eq:ripemem} (supplementary appendix),
first on a fixed 20-case numerical qualification and then on a task diagnostic. In the qualification, physical and emulated
storage agree bitwise in all 16 \sysname{} cases on an L40S and again on an RTX PRO 6000, so the implementation realizes the policy.
The sequential-versus-batched verification gate, however, fails in all 16 bf16 cases on both GPUs (argmax disagreements in
two) and passes in all 16 when the same plan is rerun in fp32 with TF32 disabled, which places the discrepancy with bf16
execution shape rather than with the cache (Section~\ref{sec:repro}).

The task diagnostic (Table~\ref{tab:ouro}) re-encodes 24 GSM8K questions that earlier batches had already used and lets each
method generate until EOS or 2048 tokens. The full cache and FlashLoop's default KIVI reader \citep{flashloop} finish 23 of 24
answers; \sysname{} finishes none at $W=16$ and two at $W=64$, one of them a 38-token answer identical to the full cache's,
and 23 and 19 of its capped outputs end in near-verbatim repetition. Replaying the first question's capped $W=16$ and $W=64$
outputs teacher-forced through physical and emulated storage gives bitwise equal logits at all 2048 positions, so the
repetition is the policy's, not a ring-buffer defect (an error in the substitution code common to both paths is not excluded).
The explanation is positional: the chat-template prompts have 318--415 tokens, so with $W\le64$ at least \MOuroPromptShared{}
of each prompt is read at the deepest plane, which is exactly the prompt-side sharing that Ouro's authors found harmful. The window
protects the positions being decoded, but on this model those are not the positions that matter. Earlier 160- and 512-token
runs of the same questions are retained as diagnostics; none of these runs is a matched-memory or accuracy benchmark, and a
variant that keeps every plane of the prompt was not tested.

\subsection{Fixed-memory throughput}\label{sec:fixedmem}

Batches 6 and 8 placed WaveFront's scheduler and draft-then-verify under a 95\,GB budget at the measured acceptance rates
(Figure~\ref{fig:spec}c): one shared slot is \BsixRatio$\times$ (batch 6) and \BeightDratio$\times$ (batch 8, another node) faster
than the best full-cache configuration, through a larger batch and a cheaper step (Section~\ref{sec:memory}). Two pre-registered memory
predictions failed; the memory model reproduces both cells. On Huginn a one-slot cache is not lossless (Section~\ref{sec:fidelity}),
and \sysname{} was not run under this scheduler, so these figures are context for the memory stakes, not a measurement of \sysname{}.

\subsection{What did not help}\label{sec:negative}

\textbf{Per-head position filtering.} The starting plan also proposed keeping, per head, only the prefix positions that a
shallow loop attends to. Batch 1 found cross-depth attention affinity real but mostly static (sink plus recent tokens).
Batch 8 froze a per-head policy from batch 1 and tested it on 64 unseen AlpacaEval texts at a budget of 10\% of the prefix:
filtering costs \BeightConeRatio{} times more KL than deepest-plane storage, the two costs add (ratio \BeightCthreeRatio),
and the per-head policy is \BeightCtwoRatio$\times$ worse than top-$k$ for every head [\BeightCtwoLo, \BeightCtwoHi].
Selecting at loop 4 instead of 2 cuts the cost to \BeightCfourRatio{} of its value, consistent with ripening.

\textbf{Sinks and a second plane.} Neither removed any of Huginn's tail positions (Table~\ref{tab:window}); a recent window did.

\subsection{Reproducibility on GH200}\label{sec:repro}

Batch 4 found that the bf16 full-cache reference was not reproducible: a same-process rerun matched bitwise at only
\BfourRerunSame{} of positions, with mean KL \(\BfourRerunKL\), about two thirds of the bf16-to-fp32 floor. PyTorch's
deterministic mode did not remove it; batch 6b traced it to cuDNN's scaled-dot-product attention (cuDNN 9.24). With that
backend disabled, reruns are bitwise identical (\BsevenSamePos{} of positions, across processes and nodes) and decoding is
not slower; the deepest-plane labels and ratios were unchanged. Separately, WaveFront's decoder and its autoregressive
baseline produce identical texts on \BsevenWsameZero/40 problems without sharing, with cuDNN disabled or not, and our own
full-cache batched and sequential decoders, equal in exact arithmetic, agree on only \BeightBvfSame{} of \BeightAPrompts{}
trajectories in bf16. Computation shape alone flips bf16 trajectories at this rate, so a ``lossless'' claim for speculative
decoding in bf16 needs a stated reference computation.

This result is limited to the tested GH200 environments; the released \texttt{nocudnn} setting does not pin one attention
kernel, and our records do not establish cross-device bitwise equality. A replication should match model, tokenized
inputs, per-position initialization and software versions, then check same-device repeatability before comparing cache
fidelity or timing against a sequential baseline on the same device.

\section{Related work}\label{sec:related}

\textbf{Looped models.} Huginn \citep{huginn}, Ouro \citep{ouro} and retrofitted recurrence \citep{retrofit} train or convert
models that iterate a core block. Ouro reports that keeping only the last loop's KV for decoded tokens is nearly lossless on its model, whereas
sharing it across the prompt costs more than ten GSM8K points; we measure when deepest-plane reuse holds and why it fails on Huginn. WISE \citep{wise} finds that attention routing stabilizes early in depth
and reuses the early loops' attention support; we find that later loops are more sensitive to prefix keys and values and reuse the last loop's
planes. Both exploit convergence across depth, in opposite directions.

\textbf{Self-speculation in looped models.} Huginn's zero-shot self-speculation \citep{huginn}, LoopSpec's pipelined branches
\citep{loopspec}, WaveFront's diagonal schedule with cross-recurrence KV sharing \citep{wavefront}, Mature-Spec's mature
values for shallow drafts \citep{maturespec} and LoopAhead \citep{loopahead}. Self-speculation by layer skipping in standard
models is Draft\&Verify \citep{draftverify}; the acceptance model is that of \citet{leviathan} and \citet{chen}.

\textbf{KV compression.} For standard transformers: attention sinks and streaming windows \citep{attentionsinks},
quantization \citep{kivi} and paging \citep{pagedattention}. Along the loop axis, LLA learns a latent codec \citep{lla},
ResidualQuant quantizes loop residuals \citep{residualquant}, FlashLoop combines sparse updates with residual quantization
\citep{flashloop}, and HLA keeps a recent exact window and reads older positions through trained latents \citep{hla}. The
recent-window pattern is thus shared prior art; \sysname{} is its training-free member, and our result is that the window alone
does not repair the deepest plane. These methods' compression levels come from different models, metrics and stacks, and we
do not compare them at matched memory; the one same-model comparison we have (Table~\ref{tab:ouro}) is unmatched in memory and
shows FlashLoop finishing answers that \sysname{} does not.

\section{Limitations and conclusion}\label{sec:limits}

\textbf{Limitations.} The negative result rests on three models and small samples: \BrPrompts{} GSM8K problems in the
Huginn decoder run, \BeightBPrompts{} prompts in its replay with four prompts carrying the tail, 72 prompts on Raven, and 24
reused questions on Ouro. Only greedy decoding on GSM8K and AlpacaEval is covered, and the Ouro task runs are a diagnostic with
unmatched memory budgets, not an accuracy benchmark; an unfinished output is not established wrong. Batch R emulates storage, so
its memory is calculated and its timing is batch-size one; no serving throughput for \sysname{} is measured. We did not test a
window that also keeps the prompt's planes, nor $W$ between 64 and the prompt length, so the result bounds recent-window repair
at the sizes a ring buffer makes cheap, not every windowed policy. The ripening and sensitivity models are post hoc and fitted
on Huginn and Raven only; for Ouro the explanation rests on prompt length and Ouro's own report. The Ouro and rerun evidence
comes from rented L40S and RTX PRO 6000 GPUs with regenerated trajectories, so cross-device bitwise equality is not
established. Matched-memory comparisons with LLA, HLA, ResidualQuant, FlashLoop and WaveFront sharing remain open, and raw
outputs of the earlier batches are not released (supplementary appendix).

\textbf{Conclusion.} A recent full-depth window is the natural repair for a deepest-plane KV cache, and on the model
that motivated it, it repairs the metric that motivated it, not decoding. On Raven the deepest plane needs no repair; on
Huginn the repaired cache is slower, larger and no closer to the full cache in a real decoder; on Ouro the positions that need
repair are the prompt's, beyond any 16--64-position window. Ripeness, measurable with one prefill per prompt, and the role of
the shared positions decide the outcome; recency does not. Schedule consistency, which the window does provide, is not
fidelity, and in bf16 neither is bitwise agreement. Shared-plane caches for looped models should be qualified by a ripeness
test and a prompt-side check before any windowed repair is considered.

\clearpage
\onecolumn
\appendix
\section{Batches}\label{app:batches}

In the experimental records, each batch had a pre-registration with decision rules and predictions frozen before the
run, a frozen file table (SHA-256), on-node CPU self-tests, and an archive verified at both ends (Table~\ref{tab:batches}).
Separate unscored Nsight Compute passes were planned; batch R's pass was skipped.
Appendix~\ref{app:artifact} distinguishes the released measurement subset
from the unavailable records; the pre-registration documents remain outside the snapshot.

\begin{table}[H]\centering\small
\caption{Pre-registered batches, one GH200 each.}\label{tab:batches}
\begin{tabular}{@{}lp{7.5cm}p{6.5cm}@{}}\toprule
Batch & Question & Pre-registered label\\\midrule
1 & cross-depth attention affinity & GSM8K trivial; AlpacaEval intermediate\\
2 & which loops' prefix KV matter (Huginn) & deep KV matters\\
3 & deepest plane in generation; acceptance & exceeds precision change\\
4 & reference asymmetry; confirmation; tail & asymmetry negligible; exceeds\\
5 & WaveFront baseline & speed-ups in band\\
6, 6b & reproducibility; fixed-memory frontier & cuDNN is the source; shared $\ge1.5\times$\\
7 & batches 3--5 with cuDNN off & reproducible; unchanged\\
8 & batched verification; windows; per-head filtering; frontier & as numerics; window removes most; per-head worse\\
9 & Raven-Llama: region test; deepest plane vs.\ fp32 & deep KV matters; indistinguishable\\
R & \sysname(64) in the batched-verify decoder (second GH200 machine) & as numerics; keeps the gain; fidelity no difference\\\bottomrule
\end{tabular}
\end{table}

\section{Artifact scope and reproduction status}\label{app:artifact}

The repository snapshot contains the 42 Python files from the frozen research batch, the pinned Huginn and Raven
tokenizers, the model source retained from the upstream revisions, CPU tests and 12 original input JSON files.
All 42 Python files, six tokenizer files and 12 inputs match the historical batch-R SHA-256 table. The inputs preserve
token IDs, seeds, generation limits, stopping tokens and ordered cohort selections, so executing these inputs does not
require rebuilding prompts or re-tokenizing a dataset. The snapshot also retains the earlier Huginn and Raven input
sets; the packaged GPU workflow targets the two Huginn experiments below.

\textbf{Environment and execution.} The recorded batch-R runtime was Python 3.12, PyTorch 2.14.0 with CUDA 13.0 and
Transformers 4.57.6. The artifact includes that package inventory and a reconstructed, hash-locked 64-package
installation for Linux x86\_64/Python 3.12. This is an installation recipe, not a historical container or disk image;
the original Python patch version was not recorded. A downloader fetches only the six required Huginn model files
from revision bb6621b6 and checks their historical digests. Weights are not stored in the repository. The supplemental
\texttt{scripts/reproduce\_gpu.py} provides CPU integrity checks, a nonexecuting command plan, a one-prompt smoke run,
the complete batch-R cohort and the complete batch-8 window replay. Setup and model downloads precede GPU execution.

\begin{table}[H]\centering\small
\caption{Packaged Huginn reproduction workloads. These are executable protocols, not new measured results.}\label{tab:repro-workloads}
\begin{tabularx}{\textwidth}{@{}lp{4.0cm}X@{}}\toprule
Entry point & Fixed workload & Comparison supported\\\midrule
\texttt{smoke} & First original metric prompt; unchanged generation limit & Startup and execution of all metric arms; not a full replication\\
\texttt{batch-r} & 32 metrics (24 GSM8K, 8 chat); 12 GSM8K timings & Same-device fidelity, acceptance and timing; original $W=64$, $d=\gamma=4$\\
\texttt{window-replay} & 48 prompts (36 GSM8K, 12 chat); all eight comparison arms & Tail behavior on a common full-cache trajectory generated on the new device\\\bottomrule
\end{tabularx}
\end{table}

Each invocation requires a fresh output directory and records commands, package versions, file hashes, device
properties and original JSONL outputs. Completion checks require the exact ordered prompt lists, metadata,
configuration, arms and finite per-position metrics. Timeout or failure preserves partial logs and a failure receipt;
the archive is independently read back and hashed. A completed case list does not imply that a scientific prediction
held. Environment deviations require an explicit override and are recorded as a separate cohort. CPU tests, target
dependency resolution and an installation dry run have been checked; a clean Linux installation and H200 full-model
execution have not been independently validated by this artifact release.

\textbf{Comparable cohorts.} The original decoder run completed 22 GSM8K prompts, whereas its intended 32-prompt
cohort also contains chat prompts. The new-run analysis therefore reports the full cohort, each domain and the fixed
22 IDs corresponding to the published records separately. A matched-prompt cross-device comparison uses those 22
IDs; full-cohort results must also be retained. These outputs are neither pooled with the historical cohort nor used
to retroactively complete its interrupted run. The 12 timing prompts are measured against a sequential baseline on
the same device. In window replay, a different full-cache trajectory on another device can change the positions being
compared; common-context diagnostics are needed before attributing a discrepancy to cache sharing.

\textbf{L40S batch-R rerun.} A separate shared-host L40S execution completed all 12 timing and 32 metric prompts.
An independent CPU audit checked the archived row identities and file hashes and recomputed the frozen analyses;
raw logits were not saved. Over the 32 metric prompts, the ratio of sequential-versus-batched \sysname{} KL to the
full-cache numerical reference is 1.034 with a prompt-bootstrap interval [0.932, 1.156]. The corresponding ratio
to the deepest-plane full-cache discrepancy is 0.628 [0.342, 1.195]; the latter interval includes one. These
ratios do not establish equivalence and the decoders follow their own trajectories. The fixed historical 22-prompt
subset is reported separately in the artifact. Other GPUs shared the host, so the timing records are observational
and do not establish isolated serving performance.

\textbf{L40S window replay.} A further Huginn replay combines the first 24 completed prompts of an interrupted run
with the remaining 24 prompts from another GPU on the same shared L40S VM. Source, model, runtime and prompt identity
were checked across the two archives. On 4,511 positions from the new full-cache trajectories, the deepest-plane arm
has seven KL$>0.1$ positions in two prompts and \sysname{} $W=64$ has two in two prompts. Full-cache trajectories match
the historical GH200 records for 28/48 prompts. The frozen 2,000-draw prompt bootstrap has 87 undefined $0/0$ draws
and 159 with a zero denominator and positive numerator; its upper interval endpoint is not finite. These results are
teacher-forced diagnostics, without archived raw logits or an isolated timing comparison.

\textbf{RTX PRO window replay.} A separate 48-prompt Huginn replay used two RTX PRO 6000 GPUs across an interrupted
and resumed rental; the original partial archive and both shards are retained and independently checked. On 4,649
positions from the newly generated full-cache trajectories, the deepest-plane arm has seven KL$>0.1$ positions in
two prompts, while \sysname{} $W=64$ has none. Full-cache trajectories match the historical GH200 records for 32/48
prompts. In the frozen 2,000-draw prompt bootstrap, 288 draws have an undefined $0/0$ tail ratio and are omitted;
the resulting [0,0] interval does not establish certainty. The raw logits were not saved, and the independent
trajectories preclude treating this as a common-context cross-device comparison or an isolated speed measurement.

The released evidence subset contains original per-position records for batch 8's 48-prompt window replay and batch R's
22 completed metric prompts and 12 timing prompts. A CPU-only script, \texttt{scripts/reproduce\_evidence.py}, verifies
SHA-256 digests and recomputes their summaries with the frozen analysis and prompt-level bootstrap seeds. This makes
these two result groups independently recomputable from recorded outputs; it is not an independent GPU reproduction.
Pre-registration documents, other batches' raw outputs and the raw GSM8K source used by the prompt builder remain
outside the release. The latter is unnecessary for executing the shipped token IDs, but is needed to reconstruct the
entire upstream input-building process. Hash agreement alone does not establish when the files were frozen.

The supplemental \texttt{scripts/collect\_repro\_env.py} records package versions and file hashes without running a
model. It can check supplied batch-R inputs and model files against the historical manifests. Its report is an
inventory, not a GPU test; \texttt{docs/H200\_REPRODUCTION.md} separates startup, numerical and timing checks.

\textbf{Post hoc replay audit.} The CPU recomputation script also reports per-prompt tail counts, domain counts, paired incidence changes
and leave-one-prompt-out reductions from the original batch-8 records. Its diagnostic values are generated separately
from the historical numeric macros. Of the original \ReplayBootstrapDraws{} paired prompt-bootstrap draws (seed 3),
\ReplayBootstrapUndefined{} have no tail event in either arm and thus an undefined $0/0$ ratio. The historical analysis
omits these NaN draws; the reported interval [\BeightBoneLo, \BeightBoneHi] uses the remaining
\ReplayBootstrapDefined{} draws. It describes the sampled tail-count ratio, not the fraction of affected prompts.

\subsection{Physical-cache reference implementation}\label{app:compact}

The supplemental \texttt{compact\_ripe.py}{} implements persistent storage separately from the frozen measurement code.
It keeps one full-length plane for each prelude, core and coda layer, and a ring for each non-final recurrence of each
core layer. With reserved capacity $C$ and maximum pass length $B=\gamma+1$, the allocated tensor bytes are
\begin{equation}
M_{\mathrm{alloc}}=b\left[(n_{\mathrm{pre}}+\ncore+n_{\mathrm{coda}})C+
\ncore(R-1)\min(C,W+B)\right].
\end{equation}
Here $b$ counts both K and V for one token and layer. The additional $B$ slots retain the current verification block
and any rejected suffix; Eq.~\eqref{eq:ripemem} instead counts the committed-prefix policy. A pass beginning at $t$
needs positions $t-W$ through $t+B-1$, a span of $W+B$. After accepting $k$ drafts, the next block begins at $t+k+1$,
so its oldest required recent position advances and does not require restoring an evicted prefix. During prefill,
attention consumes the current layer's full K/V tensors directly and retains only their tail for non-final loops.

CPU tests compare actual reads and masks with the frozen full-plane oracle, cover every acceptance length, rejected
suffix overwrites, ring wraparound, and tiny-model generation. The implementation still gathers dense attention inputs;
the equation excludes their temporary storage, model activations and allocator overhead. It supports one request and
has no fused serving kernel. These checks establish implementation behavior on CPU, not GPU performance.

The separate executable protocol fixes 12 cases at $W\in\{16,64\}$: four original prompt sequences and two repeated-token
contexts of lengths 1024 and 4096, each generating 64 tokens without early stopping. It compares compact and emulated
\sysname{} rows before three rotated timing trials against sequential full-cache decoding and speculative deepest-plane
reuse. Prefill, decode, persistent bytes and CUDA peaks are reported separately. Synthetic contexts test storage and
runtime only. An L40S run completed 10 of the 12 cases and all 120 associated timing records; all ten recorded
physical/emulated numerical checks passed. The first 4096-token-prefix case then raised an asynchronous CUDA
index-out-of-bounds assertion, leaving both 4096 cases incomplete. The original qualification therefore did not pass.
Huginn's configured RoPE table covers 4096 positions, whereas the first decode after a 4096-token prompt uses position
4096. A CPU check of the original table constructor confirms that this index is out of bounds; the asynchronous CUDA
trace does not independently locate the offending operation. A separate post hoc run extended that table with the
same formula and verified that its original entries were bitwise unchanged. For the two 4096-prefix cases, both
recorded compact/emulated checks passed and all 24 timing records were present. The logged persistent KV storage was
0.759 and 0.885\,GB for compact \sysname{} at $W=16$ and $W=64$, versus 11.614\,GB for emulated \sysname{}. These cases use a
single synthetic prompt beyond the checkpoint's configured context; raw log-probability arrays were not retained, and
they do not repair the original qualification or establish output quality or serving gains. This protocol supplies
neither task accuracy nor matched-memory comparisons with other methods.

For a common-model diagnostic with FlashLoop, a supplemental adapter connects the cache to pinned official
Ouro classes and changes both loop depth and output-head selection; selecting a head alone still executes all loops.
Tiny-model CPU tests cover native full-cache agreement, physical/emulated rows and shallow drafts. A separate NumPy
auditor recomputes numerical gates from saved output positions 2--32 and checks acceptance-schedule consistency;
it cannot reconstruct the first-token or full draft-chain decisions. Separate L40S runs on pretrained Ouro are reported
below; they do not constitute a matched-memory task-quality or serving comparison.

\subsection{Ouro L40S qualification and numerical failure}\label{app:ouro-l40s}

A separate post-submission qualification uses the pinned Ouro-1.4B checkpoint and its unchanged model classes on one
L40S (48\,GB), with eager bf16 attention. Other tasks used separate GPUs on the same VM; the Triton cache was not isolated.
Four fixed sequences comprise two diagnostic prompts and repetitions to 1024
and 4096 tokens. We use four loops, draft depth two, $\gamma=4$, $W\in\{4,16\}$, both draft policies and exactly 32
outputs. The fixed cohort has four native-prefill and 16 \sysname{} records; it does not measure task quality or performance.
All four native-versus-adapter prefill comparisons satisfy the predeclared 0.03 absolute tolerance and argmax gate.
Physical and emulated \sysname{} log-probability rows are bitwise equal in all 16 cases, with matching recorded schedules.
However, all 16 sequential-versus-batched comparisons fail the 0.03 log-probability gate: their maximum absolute
errors range from 0.313 to 4.410, and two cases also have argmax and target-trajectory disagreements. A post hoc
analysis of the 496 recorded positions finds three argmax disagreements across those two cases; the largest per-position
$\KL(\text{sequential}\|\text{batched})$ is 0.505. These dependent, teacher-forced positions do not measure task accuracy.
A separate NumPy audit of the archived arrays checks all 20 records and their hashes and reproduces the complete numerical failure. The
audit covers output positions 2--32; it cannot independently reconstruct the first-token or full draft-chain decisions.

A post hoc diagnostic isolates the first query after each of the two short prompts, comparing one-token and five-token
calls after identical prefills, with all later block tokens masked from that query. On the same L40S the maximum
log-probability difference is 0.227 and 0.125 in bf16, versus $1.34\times10^{-5}$ and $1.14\times10^{-5}$ in fp32;
the first-query argmax agrees in both precisions. A second diagnostic teacher-forces the two recorded arithmetic
trajectories through single-token and five-token blocks without draft chains. Each has one bf16 argmax disagreement,
versus none in fp32; maximum log-probability differences are 0.733 and 1.331 in bf16, and $2.63\times10^{-4}$ and
$6.48\times10^{-4}$ in fp32. These checks are consistent with sensitivity to execution shape and precision on these
inputs, but do not establish the cause of every later discrepancy or an application-level quality effect.
We also repeated the fixed 20-case cohort on one RTX PRO 6000. In bf16, all 16 sequential-versus-batched rows fail
the same 0.03 gate, with maximum errors between 0.373 and 2.087; physical and emulated rows match bitwise. On the
same GPU, a post hoc fp32 rerun with TF32 disabled passes all 16 comparisons (maximum error $2.98\times10^{-4}$).
Another shared L40S fp32 run passes all 16 (maximum $6.49\times10^{-4}$). All four native-prefill checks pass in
each of these runs. Independent audits recompute the saved arrays for output positions 2--32; the fp32 plan was
chosen after the bf16 failure and does not change its result or establish task quality.
The failed fixed cohort remains a failure; we do not relax its gates or infer a serving advantage from these results.

\section{Prediction ledger}\label{app:ledger}

Table~\ref{tab:ledger} lists every pre-registered prediction and its outcome, including the ones that failed. Batch R's
metric outcomes use its 22 completed prompts. Its prediction of an absolute trajectory count out of 32 remains
incomplete: 19 matches out of 22 cannot decide whether the final count would lie in 14--26 out of 32.

\begin{table}[H]\centering\small
\caption{Pre-registered predictions and outcomes (all reported, including failures).}\label{tab:ledger}
\begin{tabular}{@{}lp{8.4cm}p{5.6cm}@{}}\toprule
Batch & Prediction & Outcome\\\midrule
1 & coverage rises with $r$ & held\\
1 & head-median coverage at $r=4$ in 0.6--0.85 & GSM8K above (\BoneGsmMed), AlpacaEval held\\
1 & shallow top-$k$ above the static set by a margin & direction held; magnitude too small (``trivial'' on GSM8K)\\
2 & deep KV matters at matched size & held ($e^{\BtwoDeepBf}$ bf16, $e^{\BtwoDeepFp}$ fp32)\\
2 & \texttt{compress-s4} first-position agreement $\ge 0.95$ & held (\BtwoSfourFirst)\\
3 & deepest-plane cache indistinguishable from a precision change & failed (``exceeds'', via the tail)\\
3 & deepest-plane cache raises acceptance & held (\BthreeFullFour$\to$\BthreeSfourFour{} at $d=4$)\\
4 & reference asymmetry changes the ratio by $>10\%$ & failed (0.5--5\%)\\
4 & confirmatory set ``intermediate'' & failed (``exceeds'', one prompt)\\
4 & bf16 reruns bitwise identical & failed (\BfourRerunSame{} of positions)\\
5 & speed-up 2--3$\times$ unshared, 3.5--5$\times$ shared & held (\BfiveSpeedZero$\times$, \BfiveSpeedOne$\times$)\\
5 & WaveFront token-lossless (match $\ge 0.99$) & failed (\BfiveSameZero/40 texts identical)\\
6 & bf16 rerun noise is real & held\\
6 & deterministic mode removes it & failed\\
6 & WaveFront still not lossless in deterministic mode & failed by rule (8/8; chance $\approx$13\%)\\
6 & speed-up falls with batch size & held\\
6 & shared $\ge 1.5\times$ best lossless at 95 GB & held (\BsixRatio$\times$)\\
6 & unshared runs out of memory between batch 24 and 32 & failed (32 fits)\\
6b & source is cuDNN attention & held\\
7 & reruns reproducible with cuDNN off & held\\
7 & ratios rise to 2.5--4 with cuDNN off & failed (unchanged)\\
7 & WaveFront 40/40 with cuDNN off & failed (\BsevenWsameZero/40, \BsevenWsameOne/40)\\
8 & batched deepest-plane verification $\ne$ sequential (ratio 3--30) & failed (\BeightAoneRatio, numerics)\\
8 & batched verification closer to the full cache & failed (\BeightAtwoRatio)\\
8 & tokens per verification 3.0--3.6 (full), $+\ge 0.5$ (shared) & held (\BeightTpvFull, \BeightTpvSfour)\\
8 & speed-up 1.5--2.3$\times$ (full), 2.0--3.0$\times$ (shared) & held (\BeightSpeedFull, \BeightSpeedSfour)\\
8 & window 64 removes most tail positions & held (\BeightBoneRed)\\
8 & two planes remove 20--50\% & failed (none)\\
8 & per-head $\approx$ all top-$k$ & failed (\BeightCtwoRatio$\times$ worse)\\
8 & filtering costs $>3\times$ sharing; costs add & held (\BeightConeRatio, \BeightCthreeRatio)\\
8 & unshared OOM at batch 38 & failed (fits)\\
8 & shared/lossless ratio 2.4--2.8 & failed (\BeightDratio)\\
9 & Raven: deep KV matters, deep coefficient $\ge 3$ (bf16) & held (\BnineDeepBf)\\
9 & Raven: deepest-plane first-position agreement $\ge 0.95$ & held (\BnineSsixFirst)\\
9 & Raven: deepest plane ``exceeds'', more than on Huginn & failed (indistinguishable, no tail)\\
R & batched \sysname{} as numerics, ratio 0.7--1.3 & held (\BrOneRatio)\\
R & identical trajectories as in batch 8's full cache (14--26 of 32) & incomplete (\BrDdEqSeq/\BrPrompts; 32 planned)\\
R & deepest-plane drafts: 4.0--4.6 tokens/verification, keep $\ge70\%$ of the gain & held (\BrTauDd, \BrRecDd)\\
R & drafts following $\pi_{\textsc{\sysname}}$ keep $\le50\%$ of the gain & held (\BrRecRipe)\\
R & \sysname{} target closer to the full cache than the deepest plane & failed (no difference, \BrThreeRatio)\\
R & fewer positions with KL$>0.1$ than the deepest plane & held (\BrTailDd{} vs.\ \BrTailSfour)\\
R & speed-up 2.2--2.8$\times$, not below the full cache & held (\BrSpeedDd$\times$)\\
R & sequential \sysname{} at least 0.9 of sequential full-cache speed & failed (\BrSeqRatio)\\
R & cost model within 10\% & held (\BrErrRipe, \BrErrDd)\\\bottomrule
\end{tabular}
\end{table}

\section{Fitted parameters}\label{app:fits}

The fitted parameters and historical tables in this section are generated by one script from the batch summaries and
raw per-position records (Tables~\ref{tab:thr}, \ref{tab:accept}; Figure~\ref{fig:ripenv}). The later FlashLoop
diagnostic uses a separate independent audit of its archived arrays.

\begin{table}[H]\centering\small
\caption{Ripening fits (keys; values in parentheses). $\rho$ and $c$ from fp32, loops 8--31; $\Delta_\infty$ fitted to bf16
with $\rho, c$ fixed.}
\begin{tabular}{@{}lcc@{}}\toprule
 & Huginn & Raven\\\midrule
$\rho$ & \MRhoHugK{} (\MRhoHugV) & \MRhoRavK{} (\MRhoRavV)\\
half-life (loops) & \MHalfHugK{} (\MHalfHugV) & \MHalfRavK{} (\MHalfRavV)\\
$\Delta_\infty$ (bf16) & \MFloorHugK{} (\MFloorHugV) & \MFloorRavK{} (\MFloorRavV)\\
deep-loop $\Delta$, bf16 & \MDeepBfHugK{} (\MDeepBfHugV) & \MDeepBfRavK{} (\MDeepBfRavV)\\
deep-loop $\Delta$, fp32 & \MDeepFpHugK{} (\MDeepFpHugV) & \(\MDeepFpRavK\) (\(\MDeepFpRavV\))\\
max $|\log$ residual$|$, fp32 & \MResFpHugK & \MResFpRavK\\\bottomrule
\end{tabular}
\end{table}

\begin{table}[H]\centering\small
\caption{Throughput fits $T(B)=B/(a+bB)$ per configuration of WaveFront's batched emulator (Section~\ref{sec:memory}):
saturation throughput $T_\infty=1/b$ (tokens/s), half-saturation batch $B_{1/2}=a/b$ and largest relative residual. Batch-8
series cover a narrow batch range and ran on another node; their parameters are descriptive only.}\label{tab:thr}
\begin{tabular}{@{}cclcrrrrr@{}}\toprule
Batch & Slots & Scheduler & $\alpha$ & $B$ range & Cells & $T_\infty$ & $B_{1/2}$ & Residual\\\midrule
6 & 0 & AR & --- & 1--32 & 5 & 394 & 69 & 4\%\\
6 & 0 & DtV & 0.878 & 1--32 & 5 & 449 & 33 & 2\%\\
6 & 0 & DtV & 0.965 & 1--32 & 5 & 612 & 34 & 3\%\\
6 & 0 & WaveFront & 0.878 & 1--32 & 5 & 244 & 13 & 3\%\\
6 & 0 & WaveFront & 0.965 & 1--32 & 5 & 310 & 11 & 6\%\\
6 & 1 & AR & --- & 1--64 & 6 & 461 & 83 & 4\%\\
6 & 1 & DtV & 0.878 & 1--64 & 6 & 494 & 38 & 3\%\\
6 & 1 & DtV & 0.965 & 1--64 & 6 & 656 & 39 & 6\%\\
6 & 1 & WaveFront & 0.878 & 1--64 & 6 & 535 & 30 & 4\%\\
6 & 1 & WaveFront & 0.965 & 1--64 & 6 & 689 & 27 & 6\%\\
6 & 4 & AR & --- & 1--64 & 6 & 395 & 69 & 6\%\\
6 & 4 & DtV & 0.878 & 1--64 & 6 & 473 & 35 & 8\%\\
6 & 4 & DtV & 0.965 & 1--64 & 6 & 539 & 29 & 10\%\\
6 & 4 & WaveFront & 0.878 & 1--64 & 6 & 521 & 29 & 9\%\\
6 & 4 & WaveFront & 0.965 & 1--64 & 6 & 644 & 24 & 8\%\\
8 & 0 & AR & --- & 32--38 & 4 & 372 & 56 & 2\%\\
8 & 0 & DtV & 0.878 & 32--38 & 4 & 380 & 18 & 2\%\\
8 & 0 & DtV & 0.965 & 32--38 & 4 & 503 & 18 & 2\%\\
8 & 0 & WaveFront & 0.878 & 32--38 & 4 & 219 & 8 & 0\%\\
8 & 0 & WaveFront & 0.965 & 32--38 & 4 & 280 & 7 & 0\%\\
8 & 1 & AR & --- & 64--256 & 5 & 308 & 28 & 2\%\\
8 & 1 & DtV & 0.878 & 64--256 & 5 & 443 & 24 & 1\%\\
8 & 1 & DtV & 0.965 & 64--256 & 5 & 555 & 20 & 2\%\\
8 & 1 & WaveFront & 0.878 & 64--256 & 5 & 483 & 18 & 2\%\\
8 & 1 & WaveFront & 0.965 & 64--256 & 5 & 605 & 14 & 2\%\\
\bottomrule

\end{tabular}
\end{table}

\begin{table}[H]\centering\small
\caption{Acceptance model check on the batch-3 simulation ($\gamma=4$): mean conditional acceptance $\alpha_d$, its spread
across the four draft positions, number of blocks, and tokens per verification predicted by Eq.~\eqref{eq:tau} vs.\
simulated. Rejection $1-\alpha_d\approx k\rho_a^d$ fits $k=\MRejKFull$, $\rho_a=\MRejRhoFull$ (full cache) and
$k=\MRejKDeep$, $\rho_a=\MRejRhoDeep$ (deepest plane).}\label{tab:accept}
\begin{tabular}{@{}lcccrcc@{}}\toprule
Cache & $d$ & $\alpha_d$ & Spread & Blocks & $\tau$ model & $\tau$ sim.\\\midrule
full cache & 2 & 0.637 & 0.041 & 2280 & 2.46 & 2.44\\
full cache & 4 & 0.805 & 0.062 & 1676 & 3.40 & 3.31\\
full cache & 8 & 0.932 & 0.021 & 1281 & 4.37 & 4.34\\
deepest plane & 2 & 0.800 & 0.041 & 1619 & 3.36 & 3.41\\
deepest plane & 4 & 0.928 & 0.008 & 1276 & 4.33 & 4.33\\
deepest plane & 8 & 0.984 & 0.005 & 1139 & 4.84 & 4.85\\
\bottomrule

\end{tabular}
\end{table}

\begin{figure}[H]
\centering
\includegraphics{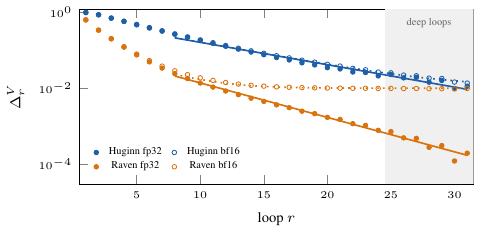}
\caption{Ripening of prefix values, as Figure~\ref{fig:mech}c shows for keys: fp32 (filled), bf16 (open), geometric
fit to fp32 over loops 8--31 (solid) and the same fit with one bf16 floor (dotted).}\label{fig:ripenv}
\end{figure}

\section{Post hoc common-model FlashLoop diagnostic}\label{app:flashloop-ouro}

We additionally ran pinned FlashLoop commit \texttt{aeaadee} with its default KIVI reader on a separate L40S VM,
using the same Ouro-1.4B checkpoint and four inputs above. This run used bf16, four loops, PyTorch 2.14/CUDA 13 and
Transformers 4.56.2; other VMs shared the physical host. We generated 32 tokens without early stopping with sequential
full cache, sequential physical \sysname{} ($W=4,16,64$), and FlashLoop's engine. These are different implementations:
FlashLoop also uses token and attention sparsity and residual quantization. The full and \sysname{} runs use eager attention;
the FlashLoop engine uses its own attention and cache readers. The observed native/adapter full-prefill rows agree
bitwise on all four inputs. All 20 free-generation records and their generated log-probability arrays were archived.

We then forced each method along the same full-cache trajectory and independently recomputed KL divergence and
argmax differences from the archived arrays for output positions 2--32 (Table~\ref{tab:flashloop-ouro}). The two
natural prompts are diagnostics, and the long contexts repeat one of them; the 31 positions per case are dependent.
The $W=64$ \sysname{} results are close on the two short prompts but differ substantially on these synthetic long contexts.
FlashLoop is closer to full cache on the repeated contexts in this sample. At length 4096, recorded persistent KV
allocation is 0.825\,GB for \sysname{} $W=16$, 0.853\,GB for \sysname{} $W=64$, and 0.601\,GB for FlashLoop; the cache allocation
policies differ. These numbers are not matched budgets, quality scores, or serving measurements. In particular,
sequential \sysname{} here does not repair the speculative numerical failures above.

\begin{table}[H]\centering\small
\caption{Ouro common-trajectory diagnostic: mean $\KL(\text{full}\|\text{arm})$ / argmax disagreements out of 31
dependent positions, for one 32-token full-cache trajectory per input.}\label{tab:flashloop-ouro}
\begin{tabular}{@{}lrrrr@{}}\toprule
Input & \sysname{} $W=4$ & \sysname{} $W=16$ & \sysname{} $W=64$ & FlashLoop\\\midrule
Arithmetic & 0.413/4 & 0.0142/1 & 0.000100/0 & 0.0216/1\\
Explanation & 0.795/7 & 0.00915/1 & 0.000320/0 & 0.0167/1\\
Repeated 1024 & 2.70/13 & 1.80/11 & 1.17/7 & 0.0260/1\\
Repeated 4096 & 4.31/23 & 2.31/14 & 2.51/12 & 0.00197/1\\\bottomrule
\end{tabular}
\end{table}

\textbf{Post hoc task-format diagnostic.} We also re-encoded the 24 frozen GSM8K questions with the Ouro chat
template and generated at most 160 tokens per question with full cache, \sysname{} $W=16,64$, and FlashLoop's default
KIVI reader. These questions are a reused convenience sample, not a held-out benchmark. Using the original strict
\texttt{answer is <number>} extractor, full cache, \sysname{} $W=16$, \sysname{} $W=64$, and FlashLoop produced parseable answers
on 2, 1, 2, and 0 of 24 questions; respectively 0, 1, 1, and 0 extracted answer strings exactly matched the
reference. Exact string comparison counts \texttt{6.00} and \texttt{6} as different. The output
limit was reached for 20, 24, 21, and 24 questions, respectively. The small number of parsed answers and frequent
truncation preclude an accuracy ranking; an unparseable output is not established to be incorrect. The independent
audit preserves the generated token sequences, stopping reasons, and numerical preflight checks. The methods use
different cache allocations and attention implementations, so these results are not matched-memory comparisons.

After observing this truncation, we repeated the same questions with a 512-token cap on L40S and, separately, on
RTX PRO 6000. In the same arm order, EOS counts were 11, 0, 2 and 14 out of 24 on L40S, and 10, 0, 1 and 12 on
RTX PRO. Length-cap counts were 13, 24, 20 and 10 on L40S, and 14, 23, 21 and 12 on RTX PRO; the remaining
outputs stopped at a next-question marker. These are post hoc, reused samples, and the frequent incomplete outputs
still prevent interpreting the strict extractor as a reliable task-accuracy comparison.

We then reran the same 24 questions with a 2048-token cap, stopping only at EOS, with all four methods for each
question on one GPU (12 questions on two L40S, 12 on two RTX PRO 6000; within-GPU pairing only). Full cache and
FlashLoop reached EOS on 23 of 24 questions, \sysname{} $W=16$ on none and $W=64$ on 2; one of these is a 38-token answer identical to the
full cache's. Of the length-capped \sysname{} outputs, 23 ($W=16$) and 19 ($W=64$) end in
near-verbatim repetition (eight-gram diversity below 0.05 over the last 512 tokens), against none for the other methods.
Restricted to EOS outputs, a post hoc last-number extraction matched the reference on 21 of 23 (full), 23 of 23
(FlashLoop) and 2 of 2 ($W=64$); this conditional score is not accuracy over the sample. A teacher-forced replay of the
first question's capped $W=16$ and $W=64$ outputs finds the physical cache bitwise identical to the emulated policy at
all 2048 positions, so the repetition is not a divergence between physical and emulated storage; because both paths share
the substitution code, the replay cannot exclude an error common to both. The chat-template prompts have 318--415 tokens, so
with $W\le 64$ most prompt positions are read at the deepest plane. This agrees with Ouro's own report that sharing the last
step's KV across the prompt costs more than ten GSM8K points, while decode-only sharing is nearly lossless \citep{ouro}. On
Ouro and this reused sample, a recent window of 16 or 64 positions with deepest-plane K/V elsewhere does not preserve the
model's ability to finish an answer; a policy that keeps all planes of the prompt is not evaluated here.

\end{document}